\renewcommand\footnotetextcopyrightpermission[1]{} 
\let\ACMmaketitle=\maketitle
\renewcommand{\maketitle}{\begingroup\let\footnote=\thanks \ACMmaketitle\endgroup}
\def\author@bx@sep{0pc}
\begin{document}

\title{Understanding the Throughput Bounds of Reconfigurable~Datacenter Networks}

\author{Vamsi Addanki}
\affiliation{
  \institution{TU Berlin}
}
\author{Chen Avin}
\affiliation{
  \institution{Ben-Gurion University of the Negev}
}
\author{Stefan Schmid}
\affiliation{
  \institution{TU Berlin}
}
\renewcommand{\shortauthors}{X et al.}


\sloppy
\begin{abstract}

    The increasing gap between the growth of datacenter traffic volume and the capacity of electrical switches led to the emergence of reconfigurable datacenter network designs based on optical circuit switching. A multitude of research works, ranging from demand-oblivious (e.g., RotorNet, Sirius) to demand-aware (e.g., Helios, ProjecToR) reconfigurable networks, demonstrate significant performance benefits.
    Unfortunately, little is formally known about the achievable throughput of such networks. Only recently have the throughput bounds of demand-oblivious networks been studied.
    In this paper, we tackle a fundamental question: \textit{Whether and to what extent can demand-aware reconfigurable networks improve the throughput of datacenters?}

    This paper attempts to understand the landscape of the throughput bounds of reconfigurable datacenter networks. Given the rise of machine learning workloads and collective communication in modern datacenters, we specifically focus on their typical communication patterns, namely uniform-residual demand matrices.
    We formally establish a separation bound of demand-aware networks over demand-oblivious networks, proving analytically that the former can provide at least $16\%$ higher throughput.
    Our analysis further uncovers new design opportunities based on periodic, fixed-duration reconfigurations that can harness the throughput benefits of demand-aware networks while inheriting the simplicity and low reconfiguration overheads of demand-oblivious networks.
    Finally, our evaluations corroborate the theoretical results of this paper, demonstrating that demand-aware networks significantly outperform oblivious networks in terms of throughput.
    This work barely scratches the surface and unveils several intriguing open questions, which we discuss at the end of this paper.
\end{abstract}

\maketitle
\thispagestyle{plain}
\pagestyle{plain}

\section{Introduction}
\label{sec:introduction}

\begin{figure}[t]
    \centering
    \includegraphics[width=1\linewidth]{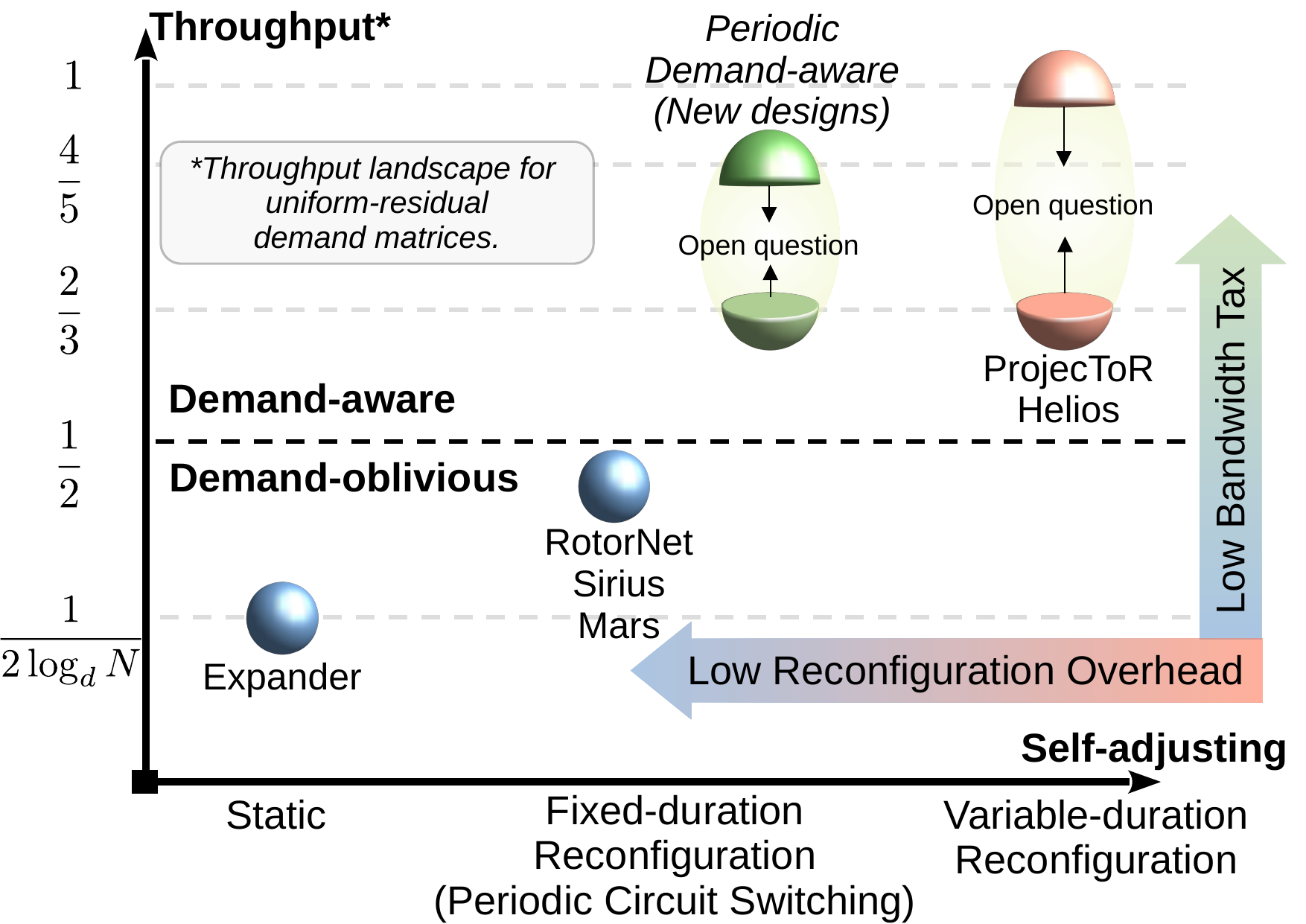}
    \caption{The landscape of throughput bounds for reconfigurable datacenter networks under uniformly skewed communication patterns: while prior works show a tight bound close to $\frac{1}{2}$ for demand-oblivious networks, we show the first separation result \ie demand-aware networks are strictly better in terms of throughput. Even simple demand-aware networks based on periodic fixed-duration reconfigurations (similar to RotorNet \& Sirius) can achieve at least $16\%$ better throughput in the worst-case.}
    \label{fig:intro}
\end{figure}

Datacenters have experienced explosive growth in overall network traffic volume over the past decade~\cite{10.1145/2785956.2787508}.
With the recent introduction of high-bandwidth Machine Learning workloads into datacenters, the peak network traffic is expected to increase even more rapidly~\cite{10.1145/3544216.3544265}.
Unfortunately, traditional networks, which are built using electrical packet switches, struggle to keep up with this growing demand~\cite{10.1145/3387514.3406221}. Further, the rapid evolution of datacenter applications and their changing bandwidth requirements implies: ``the best laid plans quickly become outdated and inefficient, making incremental and adaptive evolution a necessity''~\cite{10.1145/3544216.3544265}.
This led to the emergence of novel technologies based on reconfigurable optical circuit switches~\cite{10.1145/3098822.3098838,10.1145/3387514.3406221,10.1145/2934872.2934911,10.1145/1851182.1851223,10.1145/3544216.3544265}.

Two prominent types of reconfigurable datacenter networks emerged in the recent past: demand-oblivious~\cite{10.1145/3098822.3098838,10.1145/3387514.3406221,10.1145/3579312} and demand-aware~\cite{10.1145/2934872.2934911,10.1145/1851182.1851223,10.1145/3579449}. Notably, these networks are optically circuit-switched and feature bufferless switches.
Demand-oblivious networks, such as RotorNet~\cite{10.1145/3098822.3098838}, Sirius~\cite{10.1145/3387514.3406221} and Opera~\cite{opera}, provide low reconfiguration overheads (in the order of nanoseconds) but compromise on throughput due to their fixed and periodic switching schedules, which are independent of underlying communication patterns. In contrast, Demand-aware networks like Helios~\cite{10.1145/1851182.1851223}, ProjecToR~\cite{10.1145/2934872.2934911}, and TopoOpt~\cite{285119} achieve higher throughput because their switching schedules are optimized for the underlying communication patterns, albeit at the cost of high reconfiguration delays due to complex control plane mechanisms. A vast majority of the existing literature relies on empirical evaluations to study the performance benefits of such networks.
Analytical insights into the throughput bounds of demand-oblivious networks have only recently been obtained~\cite{10.1145/3579312,10.1145/3519935.3520020,10.1145/3491050}.
Unfortunately, formal knowledge regarding the achievable throughput of demand-aware reconfigurable datacenter networks and their comparative performance against demand-oblivious networks remains limited.

Common wisdom suggests that a demand-oblivious network is as effective as a demand-aware network if the underlying communication patterns change rapidly and are unpredictable. However, the increasing prevalence of Machine Learning workloads in modern datacenters challenges this notion, advocating for demand-aware network designs. Specifically, Machine Learning workloads, particularly DNN training workloads, tend to be periodic, with the demand matrix remaining constant throughout the training duration~\cite{285119}. For example, collective communications such as ring-AllReduce produce a characteristic permutation demand matrix.
These types of communications often lead to uniformly skewed demand matrices \ie the communication patterns are skewed but largely similar across all nodes in the network. This uniformity presents an opportunity to optimize the network topology to better align with the underlying communication patterns, as demonstrated by solutions like TopoOpt~\cite{285119}. This leads to the pivotal question:

\medskip
\emph{Can demand-aware networks consistently achieve better throughput than demand-oblivious networks, and if so, to what extent?}
\medskip

Figure~\ref{fig:intro} summarizes our results, illustrating the landscape of throughput bounds for demand-aware and demand-oblivious networks under uniform-residual demand matrices. Prior works have demonstrated that the throughput of demand-oblivious networks is tightly bounded by $\frac{1}{2}$~\cite{10.1145/3579312,10.1145/3519935.3520020}. In this work, we formally establish the first separation result: demand-aware networks are strictly superior to demand-oblivious networks in terms of throughput under uniform-residual demand matrices. Specifically, we show that the throughput of a demand-aware network is at least $\frac{2}{3}$ \ie at least $16\%$ greater than that of a demand-oblivious network.

Our analysis of throughput involves a novel technique to decompose the demand matrix in to a floor matrix and a residual matrix. Intuitively, the floor matrix represents the portion of the demand matrix that allows for easy optimization of the topology. Specifically, any source-destination demand in the floor matrix can be routed in a single hop by adding corresponding direct links without incurring any bandwidth tax. The residual matrix corresponds to the portion of the demand matrix for which an oblivious or static topology performs reasonably well. Leveraging these insights, we establish both lower and upper bounds for the throughput of demand-aware networks.

Striking a balance between the throughput benefits of demand-aware networks, and the low reconfiguration delays (overhead) of demand-oblivious networks, we uncover interesting new demand-aware designs based on periodic fixed-duration reconfigurations. Even for such designs, we formally show that the throughput lower bound of $\frac{2}{3}$ holds. Interestingly, our analysis reveals that these demand-aware periodic networks cannot achieve a throughput greater than $\frac{4}{5}$, which is the upper bound \ie up to a $30\%$ increase in throughput compared to demand-oblivious networks, achieved without sacrificing the simplicity of circuit switching and while maintaining low reconfiguration delays.

Our evaluations, based on linear programming approach, corroborate our theoretical findings by demonstrating notable improvements in throughput (maximum sustained load) for demand-aware network designs when compared to demand-oblivious networks like RotorNet and Sirius. Specifically, we observe that demand-aware periodic networks can increase throughput by as much as $49\%$ over demand-oblivious networks, achieve up to a $2.4$-fold improvement over static networks, and offer a $30\%$ absolute improvement in throughput in the worst-case scenario.

We view our work just as a first step towards understanding the throughput benefits of reconfigurable datacenter networks. We discuss various interesting open questions and future research directions at the end of this paper. For instance, while our throughput bounds apply to uniform-residual demand matrices, it remains an open question in theoretical research whether the landscape might differ with the consideration of other types of matrices. Additionally, our new demand-aware periodic designs demonstrate promising theoretical throughput properties. Yet, there is an open question in systems research regarding the feasibility of adapting the switching schedules of networks like RotorNet and Sirius in real-time, potentially even less frequently.

\medskip
In summary, our main contributions in this paper are:

\begin{itemize}[leftmargin=*,label=\small{\textcolor{myred}{$\blacksquare$}}]
    \item A first separation result proving that demand-aware reconfigurable datacenter networks are strictly superior to demand-oblivious networks under uniform-residual demand matrices.
    \item Innovative, yet simple, demand-aware network designs based on periodic fixed-duration reconfigurations. These networks achieve a throughput of at least $\frac{2}{3}$ (lower bound) and at most $\frac{4}{5}$ (upper bound) under uniform-residual demand matrices: a significant improvement over demand-oblivious networks.
    \item Empirical evaluations that support our theoretical findings, highlighting the throughput benefits of demand-aware networks compared to their demand-oblivious counterparts.
    \item As a contribution to the research community, and to facilitate future research work, all our artefacts and source code will be made available online together with this paper.
\end{itemize}



\textit{This work does not raise any ethical issues.}


\section{Motivation}
\label{sec:motivation}
In this section, we first provide a brief background on reconfigurable networks and introduce our definition of throughput (\S\ref{sec:rdcns}).
We then discuss the structure in the demand matrices of emerging machine learning workloads that motivate the need for a better understanding of the achievable throughput of reconfigurable datacenter networks (\S\ref{sec:workloads}). We prove few trivial throughput bounds making a case for demand-aware reconfigurable networks (\S\ref{sec:trivial}). Finally, we discuss the goals of our analysis ahead (\S\ref{sec:roadmap}).

\subsection{Reconfigurable Datacenter Networks}
\label{sec:rdcns}
Optical circuit switches are the fundamental building blocks of reconfigurable datacenter networks. In contrast to electrical packet switches, optical circuit switches are bufferless. Typically, servers are arranged into racks and each rack is connected to a top of the rack (ToR) switch. All the ToR switches are then interconnected by a layer of circuit switches. Depending on the type of reconfigurable network, the functionality of the underlying circuit switches differs (described next). We consider a datacenter with $n$~ToR switches, each with $u$~incoming and outgoing links; and $u$~circuit switches, each with $n$~input and output ports. Figure~\ref{fig:rdcn} illustrates a typical reconfigurable datacenter network.

\medskip
\noindent \textbf{Demand-oblivious reconfigurable networks:} Aiming at fast reconfiguration and low overhead, demand-oblivious networks do rely on control plane to configure the circuit switching schedule. Rather, a predefined set of matchings\footnote{Matching defines the forwarding from each input port to output port \ie light received from an input port is directly forwarded without any processing to an output port based on the matching.} installed on the circuit switches are executed in a \emph{periodic} manner. Further, each circuit switch executes a matching for a fixed amount of time and each switch takes a fixed amount of time to reconfigure to its next matching (fixed-duration reconfiguration). For instance, RotorNet~\cite{10.1145/3098822.3098838} deploys $\frac{n}{u}$ number of matchings in each circuit switch. Each matching is executed for $\Delta$ amount of time and it takes $0.1\times \Delta$ amount of time for each switch to reconfigure to its next matching. In essence, the whole network emulates a complete graph (mesh topology) over time~\cite{10.1145/3579312} \ie every ToR connects to every other ToR over one period of the switching schedule.

\medskip
\noindent \textbf{Demand-aware reconfigurable networks:} Aiming at optimizing the throughput, demand-aware networks rely on control plane to measure the demand matrix and, to compute and configure optimal circuit switching schedules. The resulting switching schedule can be of any length and not necessarily periodic. Demand-aware networks essentially optimize the topology for the underlying communication patter but their dependency on the control plane increases the reconfiguration overhead \ie measuring the demand matrix adds latency and calculating optimal switching schedules is computationally intensive.

\medskip
\noindent \textbf{Demand-aware static networks:} Optical circuit switches in demand-aware static networks serve a similar function as that of a patch-panel or robotic-arm. The control plane reconfigures the matching executed by each switch based on the measured demand matrix. Note that the control plane only configures one matching at each switch \ie the topology remains static for the entire duration until control plane performs another reconfiguration. This type of reconfigurable network has been deployed at Google~\cite{10.1145/3544216.3544265}.

\begin{figure}[t]
    \centering
    \includegraphics[width=1\linewidth]{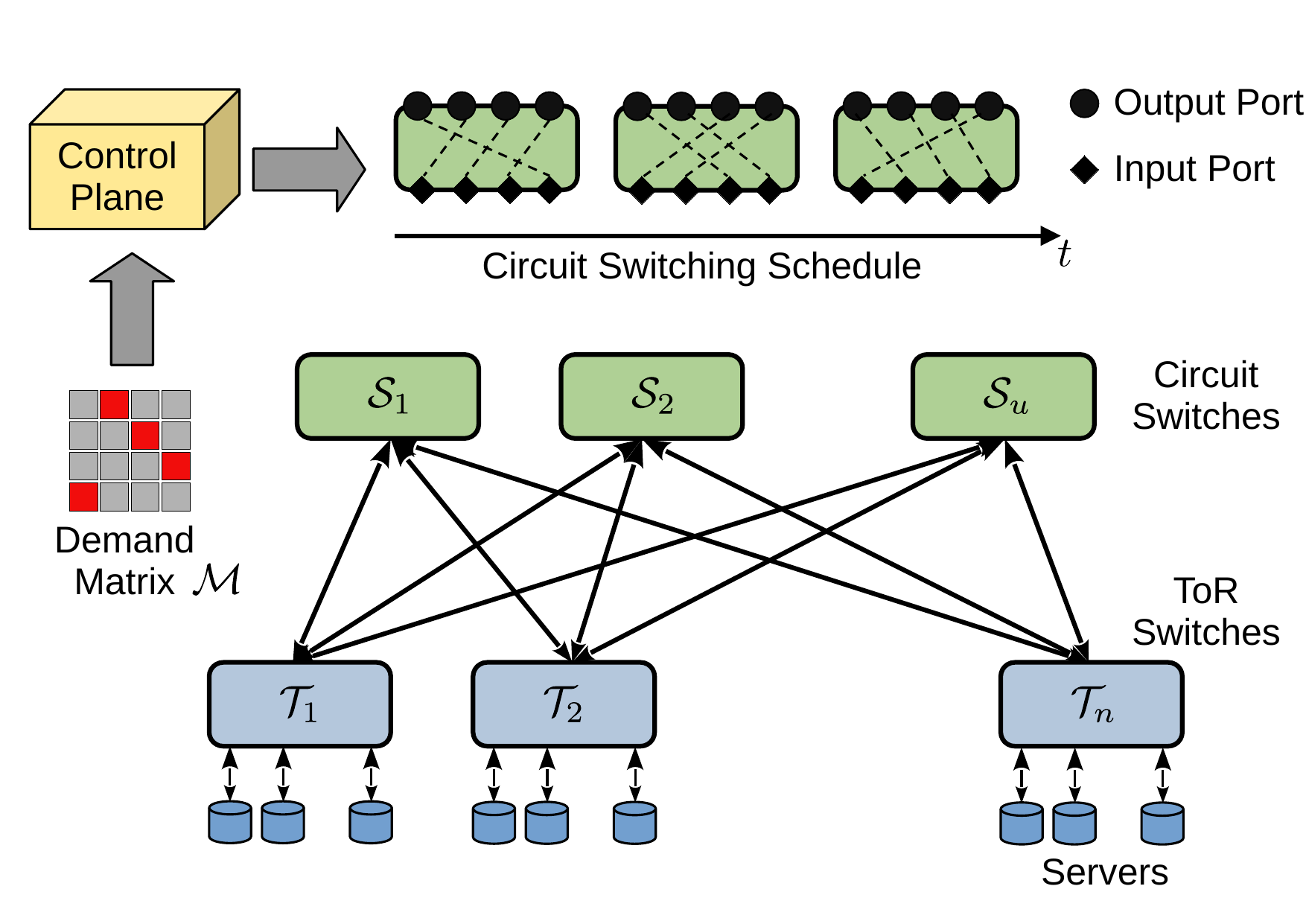}
    \caption{Physical topology of a reconfigurable datacenter network.}
    \label{fig:rdcn}
\end{figure}

In order to quantify the throughput of each type of network, we first formally define the communication pattern \ie the demand matrix (Definition~\ref{def:demand-matrix}). For simplicity, we assume that the topology is not oversubscribed and aggregate the server to server demand to represent ToR to ToR demand. The demand matrix specifies the demand in bits per second between each pair of ToR switches \ie the total demand originating from a source ToR towards a destination ToR. Following prior work~\cite{10.1145/3452296.3472913}, we consider the hose model~\cite{10.1145/316188.316209} 
such that the total demand originating from (and destined to) each ToR is less than its corresponding capacity limits.

\begin{definition}[Demand matrix]\label{def:demand-matrix}
    Given a set $\mathcal{T}$ of $n$ ToR switches each with $u$ outgoing and incoming links of capacity~$c$, a demand matrix specifies the demand rate between every pair of ToRs in bits per second defined as $\mathcal{M}=\{ m_{u,v} \mid u\in \mathcal{T}, v\in \mathcal{T} \}$ where $m_{u,v}$ is the demand between the pair $u,v$. The demand matrix is such that the total demand originating at a source~$s$ is less than its outgoing capacity and the total demand terminating at a destination~$d$ is less than its incoming capacity \ie $\sum_{u\in V} m_{s,u} \le c\cdot u$ and $\sum_{u\in V} m_{u,d} \le c \cdot u$.
\end{definition}

For a given communication pattern and the corresponding demand matrix (Definition~\ref{def:demand-matrix}), we define throughput as the maximum scaling factor such that there exists a feasible flow that can satisfy the scaled demand subject to flow conservation and capacity constraints. We denote flow by $F: P\mapsto \mathbb{R}^{+}$, a map from the set of all paths $P$ (static or temporal) to the set of non-negative real numbers. This mapping naturally ensures that the flow transmitted from a source eventually reaches the destination along a path $p\in P$. To obey capacity constraints, a feasible mapping is such that the sum of all flows traversing a link do not exceed the link capacity. We are now ready to define throughput formally.

\begin{definition}[Throughput]\label{def:throughput}
    Given a demand matrix $\mathcal{M}$ and a reconfigurable network, throughput denoted by $\theta(\mathcal{M})$ is the highest scaling factor such that there exists a feasible flow for the scaled demand matrix $\theta(\mathcal{M})\cdot \mathcal{M}$. Throughput~$\theta^*$ is the highest scaling factor for a worst-case demand matrix \ie $\theta^* = \underset{\mathcal{M}\in \hat{M}}{\min} \theta(\mathcal{M})$, where $\hat{M}$ is the set of all demand matrices.
\end{definition}

Intuitively, throughput for a specific communication pattern captures the maximum sustainable load by the underlying topology. Based on Definition~\ref{def:throughput}, similar to prior works~\cite{10.1145/3452296.3472913,10.1145/3579312,10.1145/3491050,7877143}, throughput of a topology is the minimum throughput across the set of all saturated demand matrices \ie if a topology has throughput $\theta^*$, then it can achieve at least throughput $\theta^*$ for \textit{any} demand matrix and at most throughput $\theta^*$ for a worst-case demand matrix. In contrast to traditional datacenter networks, the fundamental challenge to study throughput in the context of reconfigurable networks is that the topology changes over time and can be even be a function of the demand matrix in the case of demand-aware networks. In constructing optimal topologies for demand-aware networks, prior works rely on the structure of the underlying demand matrix and use the following intuitions: \first establish demand-aware links between source-destination pairs with large demand in order to minimize path lengths (bandwidth tax) for bulk traffic and \second ensure connectivity to satisfy every source-destination demand. We generalize these intuitions and postulate the following problem \ie Integer-Residual decomposition of a demand matrix. 

\begin{definition}[Integer-residual decomposition ]\label{def:integer-residual-decomp}
    An integer-residual decomposition of a demand matrix $\mathcal{M}$ is two matrices $Int(\M)$, and $Res(\M)$.
    $Int(\M)$ consists of only integer values, where each cell is either a floor or a ceiling of its corresponding cell in $\M$, \ie $Int(\M)_{u,v} = \lfloor \M_{u,v} \rfloor$ or 
    $Int(\M)_{u,v} = \lceil \M_{u,v} \rceil$.
    Additionally, the sum of each row and column in $\mathcal{M}$ is bounded by the corresponding row or column in $\M$,  \ie  for each row $u$, $\sum_v Int(\M)_{u,v} \le \sum_v \M_{u,v}$ and similarly for each column.
    In turn, the residual matrix $Res(M)$ is defined as the reminder in each cell or zero. Formally 
    $Res(\M)_{u,v} = \max \left(\M_{u,v}-Int(\M)_{u,v}, 0 \right)$.
\end{definition}

It follows from the definition that for all $u,v \in \T$,
$Res(\M)_{u,v} < 1$ and $\M_{u,v} \le Int(\M)_{u,v} +Res(\M)_{u,v}$. Additionally, every matrix has an integer-residual decomposition. For example, using floor function for the integer matrix \ie $Int(\M)_{u,v} = \lfloor \M_{u,v} \rfloor$ and $Res(\M)_{u,v}=\M_{u,v} - Int(\M)_{u,v}$ for every $u,v\in \T$. Intuitively, $Int(\M)$ corresponds to bulk portion of the demand matrix for which the topology can be optimized for throughput, whereas, $Res(\M)$ dictates the connectivity requirements that the optimized demand-aware topology must satisfy.

\begin{figure*}
    \centering
    \begin{tabular}{ccccc}
     \raisebox{.6cm}{\rotatebox[origin=lc]{90}{Floor Matrix}}
        \includegraphics[width=0.24\linewidth]{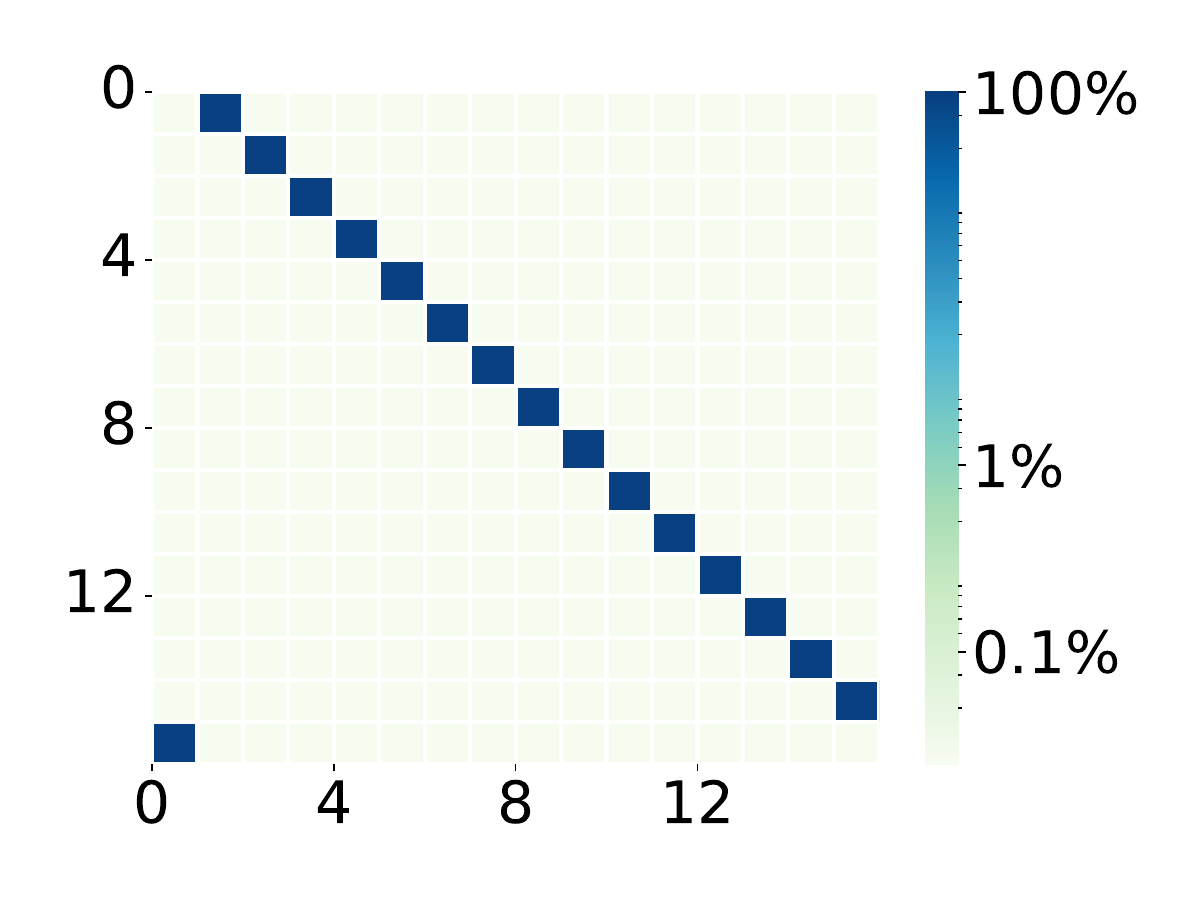}
        \includegraphics[width=0.24\linewidth]{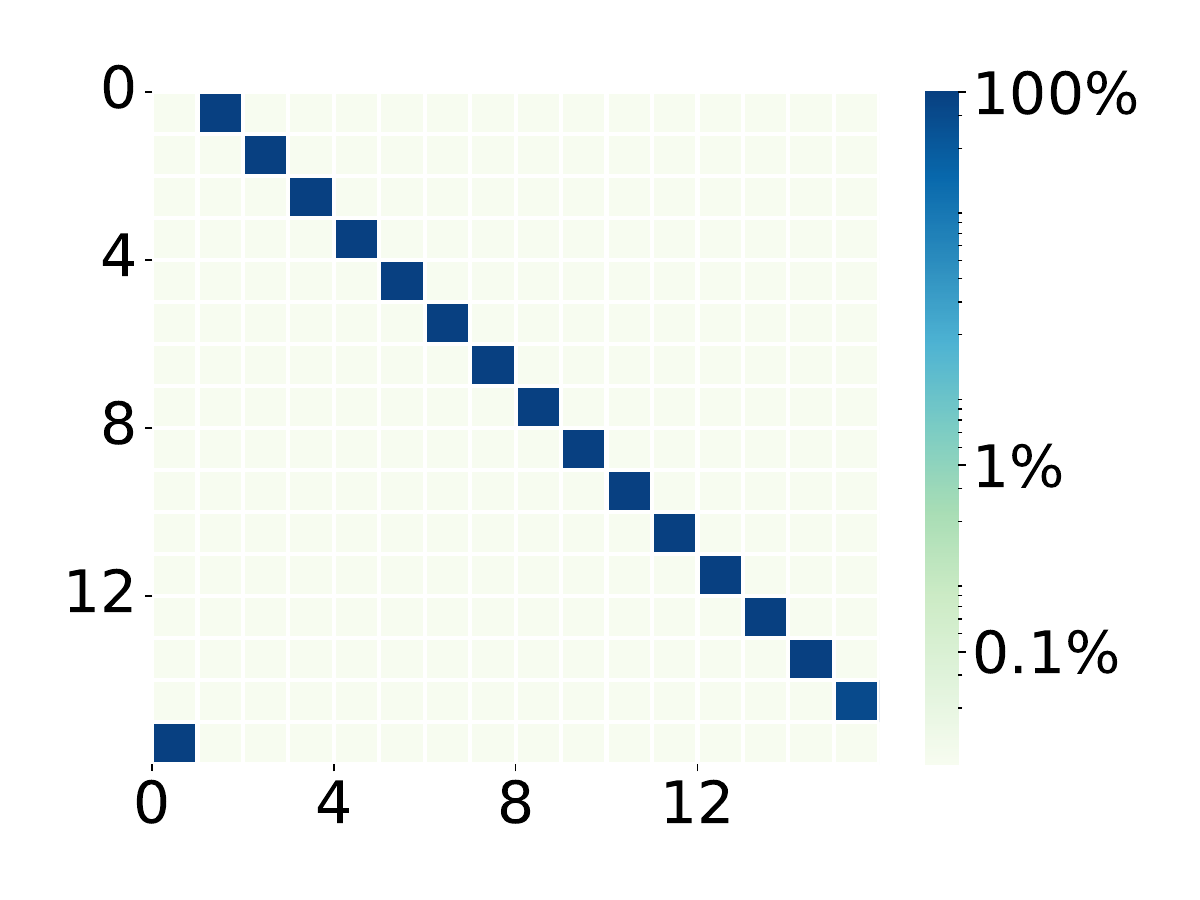}
        \includegraphics[width=0.24\linewidth]{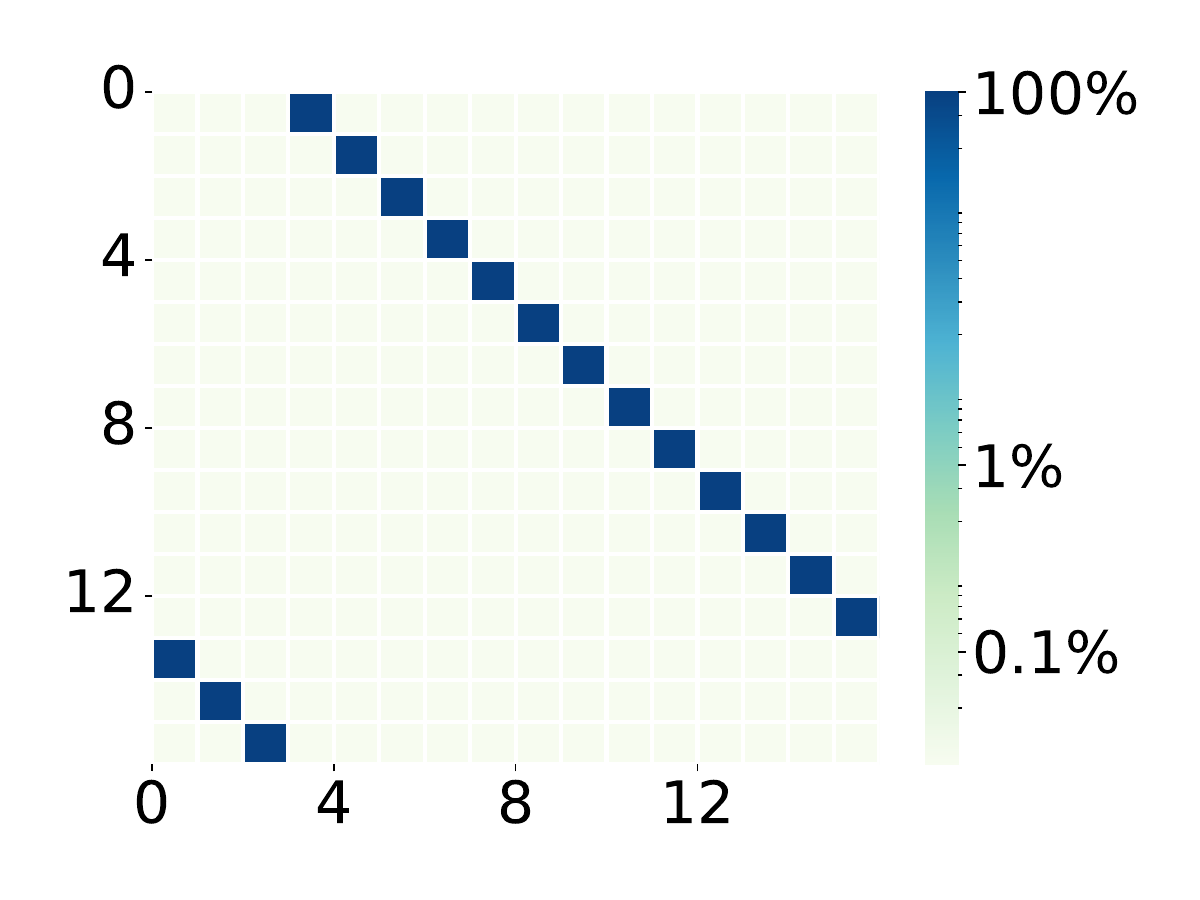} 
        \includegraphics[width=0.24\linewidth]{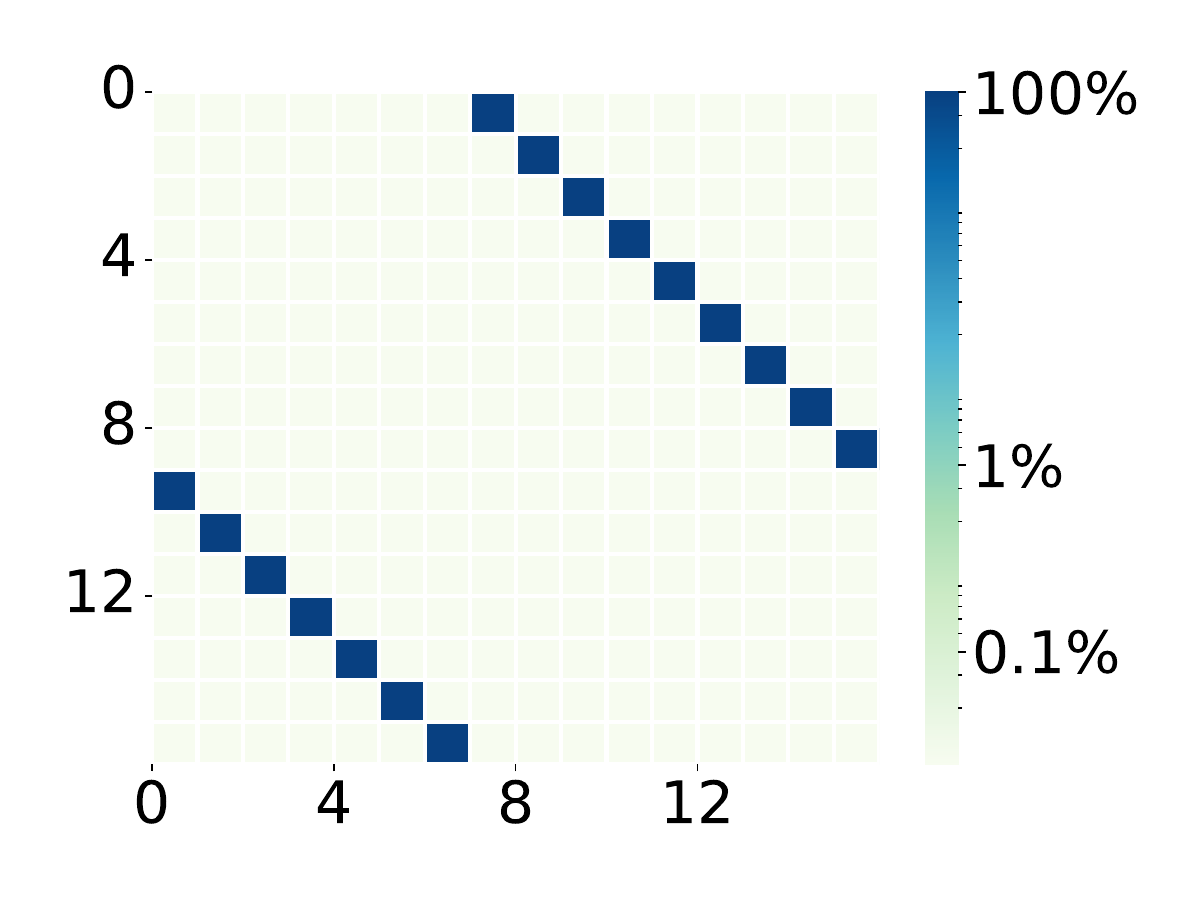} \\
    \raisebox{.4cm}{\rotatebox[origin=lc]{90}{Residual Matrix}}
         \includegraphics[width=0.24\linewidth]{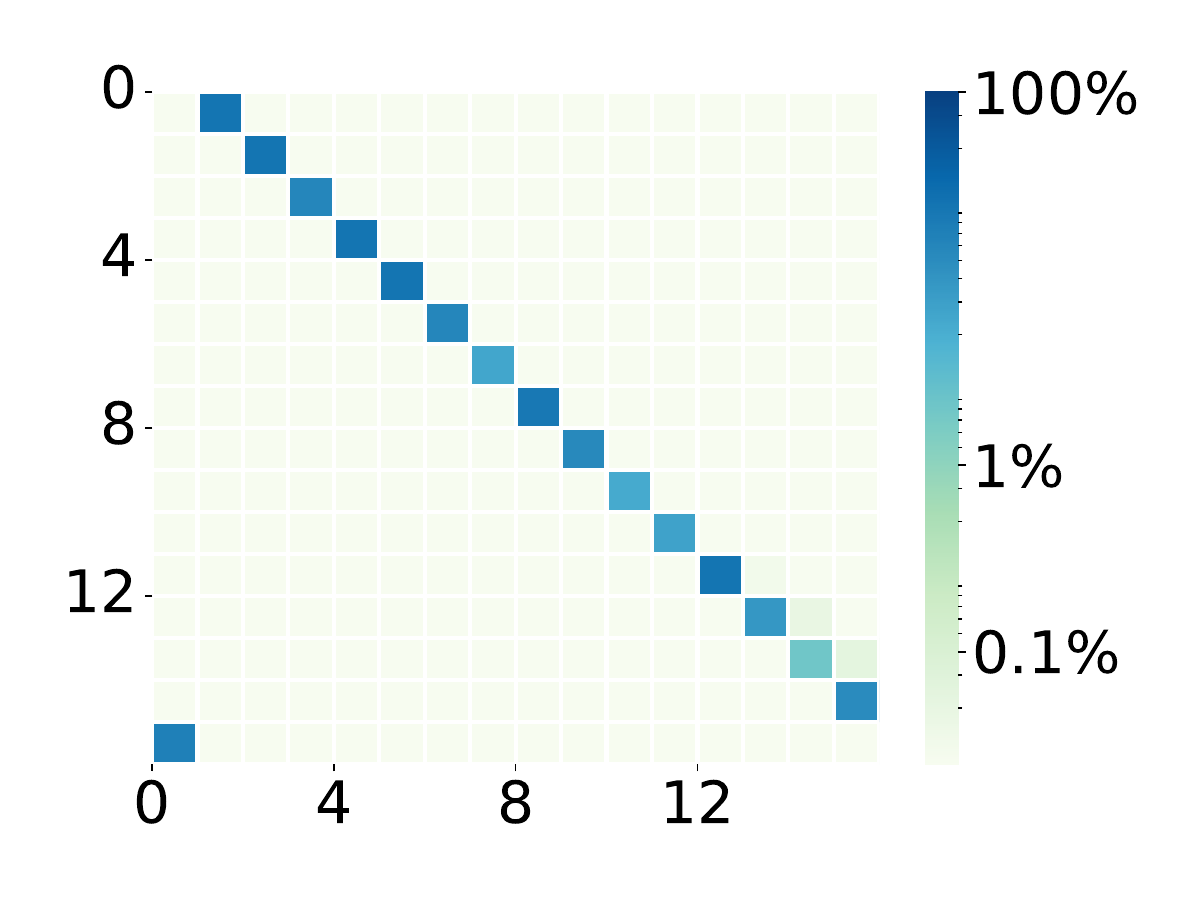} 
         \includegraphics[width=0.24\linewidth]{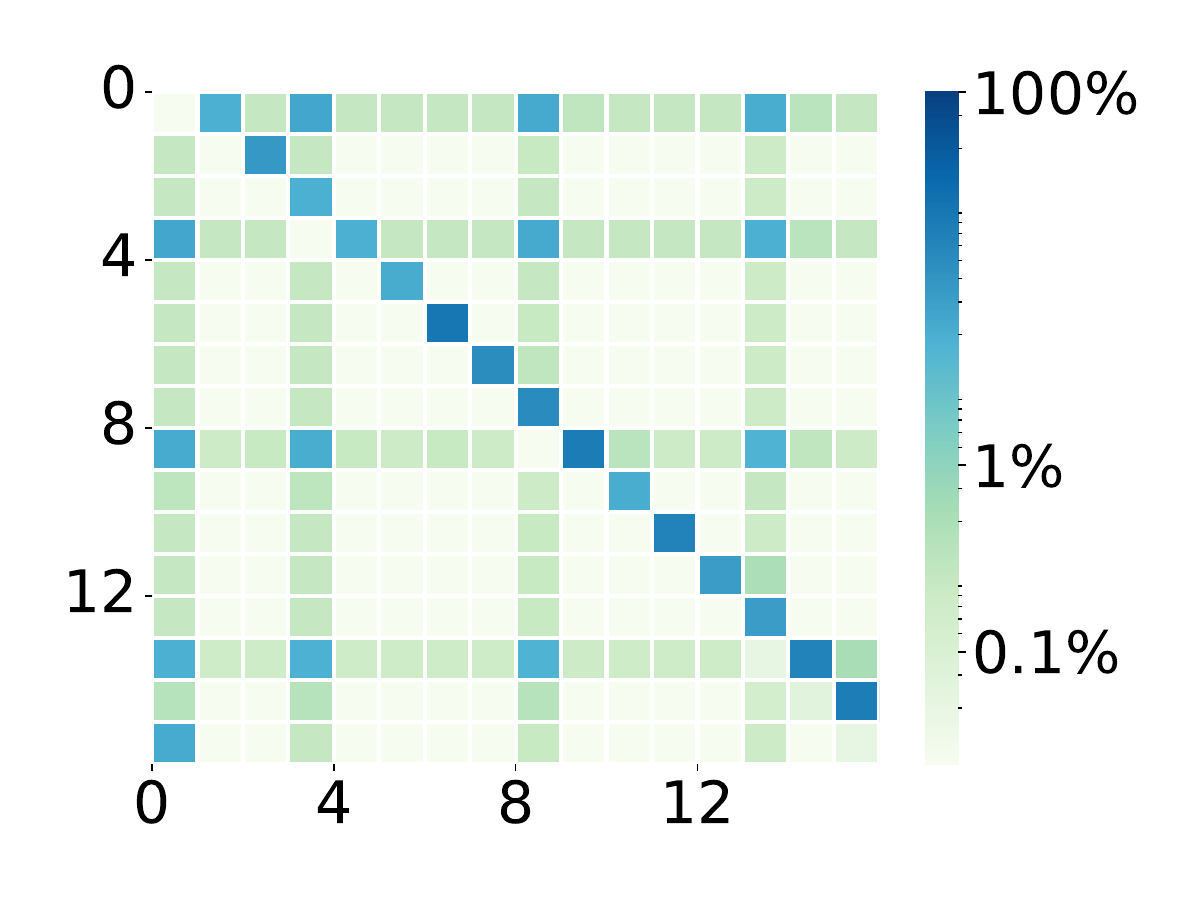}
        \includegraphics[width=0.24\linewidth]{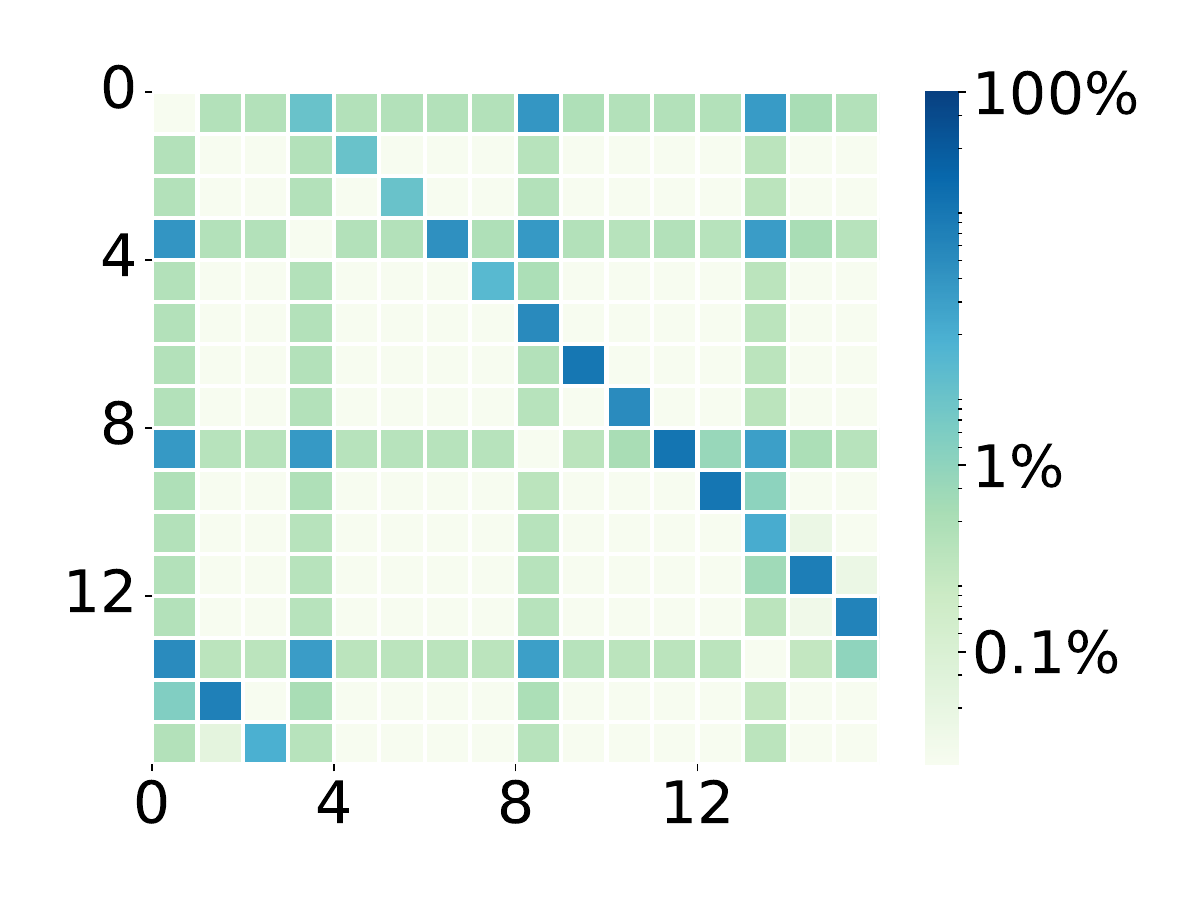}
        \includegraphics[width=0.24\linewidth]{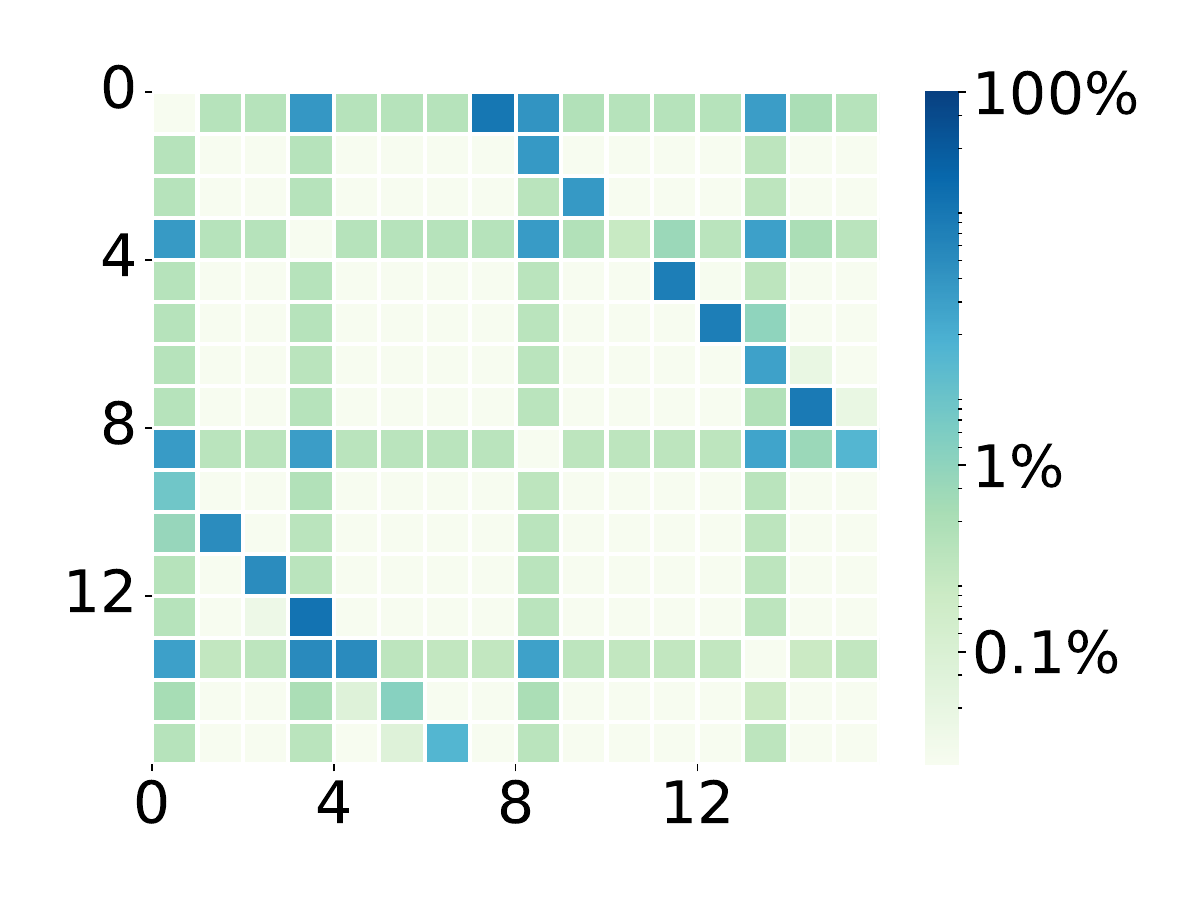} \\
    \end{tabular}
    \noindent \textbf{(a) DLRM data parallelism} \ \ \textbf{(b) DLRM hybrid parallelism} \ \ \textbf{(c) DLRM $+3$ permuted} \ \ \textbf{(d) DLRM $+7$ permuted}
    \caption{The demand matrices of emerging Machine Learning workloads, particularly DNN training workloads, exhibit excellent structure: \first the decomposed floor and residual matrices are mostly regular (in terms of the sum of rows and columns); \second the floor matrices are typically close to a permutation matrix and carry majority of the traffic, typically $>75\%$ in each row and column; \third the residual matrices typically carry very low traffic, typically $<25\%$ in each row and column. The color of each entry (cell) in the heatmaps (demand matrices) indicates the demand specified by the entry as the minimum percentage of the corresponding total source (row) demand and the corresponding total destination (column) demand.}
    \label{fig:floor-residual-dm}
\end{figure*}

\subsection{Structure in ML Workloads}
\label{sec:workloads}

Since throughput is a function of the underlying demand matrix, we now briefly shift our focus on the emerging Machine Learning workloads in modern datacenters. 
The inherent structure in the demand matrices of such workloads enables us to study throughput on a set of demand matrices of special interest while improving the tractability in our analysis later. 

To better understand the structure of the demand matrices of machine learning workloads, we consider real-world measurements published recently~\cite{285119}. In particular, these matrices correspond to DNN training workloads over $16$ servers, each with $4\times 25$Gbps links (degree $4$). While the exact values of the demand matrices have not been made open source, we use simple image processing techniques to extract the values. As prior work already pointed out, a large portion of the demand matrix is well-structured due to collective communication \eg ring-AllReduce~\cite{doi:10.1177/1094342005051521}, and the remaining portion of the demand is more uniformly spread due to model parallelism~\cite{NIPS2014_7d6044e9}. We first normalize the extracted matrices to the link capacity $25$Gbps. We then decompose the matrices to \first floor, where each entry in the matrix is the largest integer less than the corresponding entry in the original matrix and \second residual, where each entry in the matrix is the difference between the corresponding entries in the original and the floor matrix. Figure~\ref{fig:floor-residual-dm} shows the resulting floor and residual matrices of four workloads presented in prior work~\cite{285119}.

Our observation is that the floor matrix $Int(\M)$ of a typical machine learning workload is mostly regular \ie the sum of every row and column falls within a small interval of values, in fact, Figure~\ref{fig:floor-residual-dm}~(a),~(b),~(c),~(d) (top row) show that the floor matrix is close to a permutation matrix with every row and column carrying $>75\%$ of the total demand in the corresponding row and column. The remaining portion of the demand $Res(\M)$ is spread across all the nodes as seen in Figure~\ref{fig:floor-residual-dm} (bottom row).

Following these observations, we focus on \emph{uniform-residual} demand matrices, formally defined in Definition~\ref{def:uniform-residual-demand}. Intuitively, a demand matrix is uniform-residual if the percentage of the total demand in every row and column of the corresponding residual matrix is within a small interval, in particular, within one of the three intervals: $0\%$-$25\%$, or $25\%$-$50\%$, or $50\%$-$100\%$. For instance, all the demand matrices presented in Figure~\ref{fig:floor-residual-dm} are uniform-residual since every row and column of the corresponding floor matrices carries at least $75\%$ of the corresponding row and column demand and the residual matrix carries at most $25\%$ of the demand.


\begin{definition}[Uniform-residual demand matrix]\label{def:uniform-residual-demand}
A demand matrix $\M$ is uniform-residual if its normalized (by capacity $c$) matrix $\M^\prime = \frac{1}{c}\cdot \M$ has an integer-residual decomposition such that the ratios of the sum of each row $u$ in $Res(\M^\prime)$ and $\M^\prime$, \ie $\frac{\sum_{v} Res(\M^\prime)_{u,v}}{\sum_{v} \M^\prime_{u,v} }$, for all rows, and similarly for all the columns, fall within an interval: $[0, \frac{1}{4})$ or $[\frac{1}{4}, \frac{1}{2})$ or $[\frac{1}{2}, 1]$.
\end{definition}

Our definition of uniform-residual demand matrices not only captures the ML workloads shown in Figure~\ref{fig:floor-residual-dm},  but also captures several other matrices commonly used in the literature. For instance all-to-all uniform demand matrices or  permutation demand matrices are uniform-residual. 

\medskip
\noindent{\textcolor{myred}{$\blacksquare$ \textbf{\textit{Takeaway.}}} \textit{The demand matrices of typical machine learning workloads are uniform-residual, meaning that the communication pattern of every source (and destination) is predominantly similar.}}

\subsection{Straightforward Throughput Bounds}
\label{sec:trivial}

We now present few straightforward bounds on the throughput of reconfigurable networks, particularly for uniform-residual demand matrices. We focus on demand-aware static and demand-aware reconfigurable networks (see \S\ref{sec:rdcns}) and prove two straight-forward results on their throughput.

The throughput of a topology is largely affected by the number of hops required to transit each source-destination demand. To this end, if the demand matrix is such that it allows adding direct links to corresponding source-destination demands, then the topology can achieve full throughput. To this end, both demand-aware static and demand-aware reconfigurable networks can achieve full throughput by a one-shot reconfiguration for all matrices that are uniform-residual and the normalized matrix (normalized to capacity) consists of only integer values.

\begin{theorem}[Throughput under integer demand matrices]\label{th:throughput-integer}
The throughput of a demand-aware reconfigurable network is~$1$ (full-throughput), specifically for those demand matrices for which the normalized demand matrix (normalized by link capacity) is equal to the corresponding floor matrix. 
\end{theorem}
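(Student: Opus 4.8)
The plan is to prove the bound by exhibiting, for every demand matrix $\M$ whose normalization $\M' = \frac1c\,\M$ equals its floor matrix, a single static configuration of the $u$ circuit switches under which all of $\M$ is routed in one hop. This establishes $\theta(\M)\ge 1$, and a matching upper bound then pins the throughput at $1$.

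First I would record the structural facts that come for free. Since $\M'$ equals its floor matrix, every entry $\M'_{u,v}$ is a non-negative integer, and the hose constraints of Definition~\ref{def:demand-matrix} give $\sum_v \M'_{s,v}\le u$ for every source $s$ and $\sum_v \M'_{u,d}\le u$ for every destination $d$. Hence $\M'$ is a non-negative integer matrix all of whose row and column sums are at most $u$. The core step is then a decomposition of $\M'$ into circuit-switch matchings: I would build the bipartite multigraph $G$ on the two copies of $\T$ (sources on one side, destinations on the other) that contains exactly $\M'_{s,d}$ parallel edges between $s$ and $d$; by the previous sentence $G$ has maximum degree at most $u$. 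By K\"onig's edge-colouring theorem for bipartite multigraphs, $E(G)$ partitions into $u$ matchings $M_1,\dots,M_u$. Each $M_i$ is a legal configuration for a single circuit switch (it uses at most one incoming and one outgoing port per ToR), so I assign $M_i$ to circuit switch $i$ and keep the schedule static — a one-shot reconfiguration. In the resulting topology the number of direct links installed between $s$ and $d$ is the number of colour classes containing an $(s,d)$ edge, namely $\M'_{s,d}$.

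Finally I would route each demand $m_{s,d}$ entirely over its $\M'_{s,d}$ dedicated direct links, whose aggregate capacity is $\M'_{s,d}\cdot c = \M_{s,d}\ge m_{s,d}$; since every such link carries only the $(s,d)$ flow and at rate at most $c$, no capacity constraint is violated, so the demand is feasible already at scale $1$ and $\theta(\M)\ge 1$. For the reverse inequality, a saturated row or column — present whenever $\M$ is tight in the hose model — caps $\theta(\M)$ at $1$, since a ToR with total outgoing demand $cu$ cannot sustain more than its egress capacity $cu$. Combining the two bounds yields $\theta(\M)=1$.

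I expect the only real obstacle to be expository: presenting the matching decomposition cleanly when the row and column sums of $\M'$ are strictly below $u$, so that $G$ is not $u$-regular and one cannot directly invoke a Birkhoff--von Neumann argument on doubly stochastic matrices. The clean fix is either to pad $G$ to a $u$-regular bipartite multigraph with dummy edges before colouring, or simply to appeal to K\"onig's theorem, which applies to any bipartite multigraph of maximum degree $u$; everything else is bookkeeping.
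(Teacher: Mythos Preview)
Your proof is correct and follows the same idea as the paper: observe that the normalized matrix has integer entries with row and column sums at most $u$, then realize the demand by installing $\M'_{s,d}$ direct links between each pair and routing everything in one hop. You are simply more explicit than the paper about the one step it glosses over---namely, how the multigraph of direct links actually decomposes into $u$ matchings, one per circuit switch---by invoking K\"onig's edge-colouring theorem (or equivalently padding to $u$-regular and applying Birkhoff--von Neumann); the paper just asserts ``there are always a sufficient number of links available'' without naming the decomposition, and it also omits the $\theta(\M)\le 1$ direction that you supply.
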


\begin{proof}
Given that the normalized demand matrix is equivalent to the corresponding floor matrix, all entries in the normalized demand matrix are integers. To achieve full throughput, a one-shot reconfiguration of the circuit switching, with one matching at each switch, suffices. Specifically, for an entry with value $x$ in the normalized demand matrix, we add $x$ number of links between the corresponding source and destination via circuit switches. Since the total demand originating from and destined to each node cannot exceed its capacity, there are always a sufficient number of links available to satisfy the demand over single-hop paths.
\end{proof}

While Theorem~\ref{th:throughput-integer} specifically applies to certain types of demand matrices, its validity extends to any scenario with a reasonable reconfigurable delay, given that a one-shot reconfiguration suffices. We now shift our focus to encompass all demand matrices within the hose model, under the assumption of negligible reconfiguration delay \ie any number of reconfigurations can be performed over time without any overhead. The throughput of reconfigurable networks under such an assumption has been implicitly known to be $1$ in prior works~\cite{10.1145/2716281.2836126,10.1145/2486001.2486007}; we state it here with a proof to formally establish a throughput upper bound.

\begin{theorem}[Ideal throughput of demand-aware RDCN]\label{th:throughput-ideal}
The throughput of a demand-aware reconfigurable network is $1$ \ie full-throughput for any demand matrix if the reconfiguration delay is negligible. 
\end{theorem}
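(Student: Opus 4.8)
The plan is to route the entire demand on single-hop direct links that are time-shared across a Birkhoff--von~Neumann-style schedule of matchings, exploiting that zero reconfiguration overhead permits arbitrarily many matching changes within a period. First, normalize the demand: let $\M' = \frac{1}{c}\M$. By the hose constraint (Definition~\ref{def:demand-matrix}), every row sum and every column sum of $\M'$ is at most the ToR degree $u$. Next I would augment $\M'$ to a matrix $\M'' \ge \M'$ (entrywise) whose every row sum and column sum equals \emph{exactly} $u$; such an augmentation always exists because the total row slack equals the total column slack and can be filled greedily (equivalently, it is a feasible transportation instance, as $n u \ge \sum_{a,b}\M'_{a,b}$). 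Then $\M''/u$ is doubly stochastic, so by the Birkhoff--von~Neumann theorem $\M''/u = \sum_i \alpha_i P_i$ for permutation matrices $P_i$ over the $n$ ToRs, with $\alpha_i \ge 0$ and $\sum_i \alpha_i = 1$.

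The schedule is as follows. Fix a period $T$; during a sub-interval of length $\alpha_i T$, configure every one of the $u$ circuit switches to realize the matching $P_i$, and repeat each period. Since the reconfiguration delay is negligible, this finite sequence of reconfigurations incurs no loss of link-time. Under this schedule, the total link-time available to a source--destination pair $(a,b)$ over one period is $\sum_i \alpha_i T \cdot u \cdot c \cdot (P_i)_{a,b} = c\,\M''_{a,b}\,T \ge c\,\M'_{a,b}\,T = \M_{a,b}\,T$, so the demand rate $\M_{a,b}$ can be sustained entirely on the single-hop temporal path $a \to b$. Assigning all flow to these direct links satisfies flow conservation trivially and respects every link capacity by construction; hence $\M$ admits a feasible flow, i.e. $\theta(\M) \ge 1$ for every hose-model matrix. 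For the matching direction, any saturated matrix (some row sum equal to $c u$) cannot be scaled past $\theta = 1$ without exceeding a ToR's total outgoing capacity, so $\theta^* = 1$.

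The only delicate points are \first the existence of the regularizing augmentation $\M''$ and the invocation of Birkhoff--von~Neumann, both standard, and \second the fact that the coefficients $\alpha_i$ (hence an exact period $T$) may be irrational for a general real-valued $\M$. This is where the negligible-reconfiguration-delay assumption does the real work: it lets us approach the required rates with arbitrarily fine time-sharing (and achieve them exactly when $\M$ is rational), so that the continuous-time scheduling argument goes through. I expect making this last step fully rigorous -- the limiting/continuous-time scheduling argument -- to be the main thing to get right; the rest is bookkeeping.
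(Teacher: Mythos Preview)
Your proposal is correct and follows essentially the same route as the paper: reduce to a saturated (doubly-stochastic) matrix, apply Birkhoff--von~Neumann, and time-share the resulting permutations across all $u$ switches for fractions of a period proportional to their BvN coefficients. The paper handles the reduction by citing that it suffices to consider saturated matrices (rather than building your explicit augmentation $\M''$), and it does not pause over the irrational-coefficient issue you flag; otherwise the arguments are the same.
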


\begin{proof}
Within the hose model set of demand matrices, we consider saturated demand matrices \ie the sum of every row (column) equals the outgoing (incoming) capacity of each node. If a topology can achieve throughput $\theta$ for all saturated demand matrices, then the topology can achieve throughput $\theta$ for any demand matrix~\cite{10.1145/3452296.3472913}. Given that saturated demand matrices are doubly stochastic, we first decompose the matrix using Birkhoff–von Neumann (BvN) decomposition technique~\cite{birkhoff1946three} into $k$ permutation matrices, where $k$ can be up to $n^2$. Let $\mathcal{M}$ be any saturated demand matrix, where the sum of every row and column is $c\cdot u$ (total capacity of each node). Let the corresponding BvN decomposition be $\mathcal{M} = \lambda_1 \cdot P_1 + \lambda_2\cdot P_2 ... + \lambda_k P_k$, where $P_i$ is a permutation matrix and the coefficients $\lambda$ are such that $\sum_{i=1}^k \lambda = c\cdot u$.
Using this decomposition, we configure the topology such that each permutation $P_i$ is executed using full node capacity $c\cdot u$ for $\frac{\lambda_i}{c\cdot u}\cdot \Delta$ units of time over a period of one unit of time $\Delta$. Over $\Delta$ amount of time, $\lambda_i\cdot P_i$ portion of the demand matrix generates $\lambda_i\cdot P_i\cdot \Delta$ demand in volume. As a result, during $\frac{\lambda_i}{c\cdot u}\cdot \Delta$ amount of time, by executing the corresponding permutation $P_i$ using full capacity $c\cdot u$, the topology can fully satisfy $\lambda_i\cdot P_i$ portion of the demand matrix.
As a result, the topology can fully satisfy the demand matrix $\mathcal{M}$ over each period of one unit of time $\Delta$ and achieves full throughput. 
\end{proof}

\medskip
\noindent{\textcolor{myred}{$\blacksquare$ \textbf{\textit{Takeaway.}}}
Demand-aware reconfigurable networks can ideally achieve full-throughput if the reconfiguration delays are negligible $\approx 0$. However, the achievable throughput under realistic reconfiguration delays is still unclear. 

\subsection{RoadMap}
\label{sec:roadmap}

Our analysis in \S\ref{sec:trivial} gives an upper bound of $1$ for the throughput of demand-aware reconfigurable networks, illustrating the potential throughput that such networks could achieve if reconfiguration delays can be reduced arbitrarily close to zero.
In contrast, the throughput of demand-oblivious reconfigurable networks is still bound by $\frac{1}{2}$ even within the set of uniform-residual demand matrices since permutation matrices (the worst-case~\cite{10.1145/3579312}) are a subset of uniform-residual matrices. 

To better understand the \emph{achievable} throughput (lower bound) of demand-aware reconfigurable networks under realistic reconfiguration delays, we need to first answer the following fundamental questions:

\begin{itemize}[leftmargin=*,label=\small{\textcolor{myred}{$\blacksquare$}}]
    \item What is the throughput achievable by a demand-aware static network \ie a demand-aware network where a one-shot reconfiguration is allowed once in a large interval of time? (\S\ref{sec:demand-aware-static})
    \item What is the throughput achievable by a demand-aware reconfigurable network if the circuit switching schedule is restricted (simplified) to be periodic and of fixed-duration?~(\S\ref{sec:demand-aware-periodic})
\end{itemize}

By answering the above questions, we directly establish a lower bound for the throughput of demand-aware reconfigurable networks. This is because both demand-aware static and demand-aware periodic networks fall within the broader category of general demand-aware reconfigurable networks.

\section{Throughput Landscape of RDCNs}
\label{sec:throughput}
Building upon on our observations in \S\ref{sec:motivation}, in this section, we primarily focus on the throughput of demand-aware static (\S\ref{sec:demand-aware-static}) and demand-aware periodic networks (\S\ref{sec:demand-aware-periodic}). We establish a throughput lower bound of $\frac{2}{3}$ for both these types of networks. This finding implies a general lower bound of $\frac{2}{3}$ for the throughput of demand-aware reconfigurable networks as a whole. Interestingly, the technique to decompose a demand matrix into floor and residual matrices plays a crucial role in our throughput analysis of demand-aware networks, as we will see later in this section.
Our introduction of demand-aware periodic networks represents a novel contribution, and our throughput analysis draws on an interesting connection between demand-aware static and demand-aware periodic networks (\S\ref{sec:demand-aware-periodic}). 

\subsection{Demand-aware Static Networks}\label{sec:demand-aware-static}

We consider demand-aware static networks (described in \S\ref{sec:rdcns}) with $n$ ToR switches, each equipped with $u$ incoming and outgoing links, $u$ optical circuit switches, each having $n$ input and output ports. Our analysis is confined to the scenario where $u=n$, which will later be crucial for our analysis of demand-aware periodic networks.

In order to prove a throughput lower bound of $\frac{2}{3}$ under uniform-residual demand matrices, it is sufficient to consider the matrices for which the sum of every row (source) and every column (destination) equals $\frac{2}{3}$ fraction of the total node capacity $c\cdot n$ \ie doubly-stochastic matrices\footnote{If the matrix is not doubly stochastic, within the upper limit of $\frac{2\cdot c\cdot n}{3}$ for each row and column, then the matrix can be augmented by a non-negative valued demand matrix to convert it to doubly stochastic. Since we augment by a non-negative demand matrix, the throughput of the original demand matrix cannot be smaller than that of the doubly stochastic matrix.}, and showing that every source-destination demand can be satisfied in the network. We call such matrices $\frac{2}{3}$-saturated demand matrices.

Across the set of all $\frac{2}{3}$-saturated demand matrices that are uniform-residual, we prove our lower bound in three steps based on the range of percentage of total demand in every row and column of the corresponding floor matrices \first between $0\%$-$50\%$ (Lemma~\ref{lem:lb-interval1}) \second between $50\%$-$75\%$ (Lemma~\ref{lem:lb-interval2}) and \third between $75\%$-$100\%$ (Lemma~\ref{lem:lb-interval3}). Note that a demand-aware static network only executes one matching in each circuit switch. As a result, in the following, we first analyze ToR-to-ToR graphs of degree $n$ which we later decompose to $n$ matchings corresponding to each optical switch.

\begin{Lemma}\label{lem:lb-interval1}
Given any $\frac{2}{3}$-saturated demand matrix $\mathcal{M}$ that is uniform-residual (Definition~\ref{def:uniform-residual-demand}) within the interval $[\frac{1}{2}, 1]$, then a demand-aware static network of degree $n$ can fully satisfy the demand.   
\end{Lemma}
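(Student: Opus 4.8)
The plan is to build, for an arbitrary such $\M$, an $n$-regular directed ToR-to-ToR graph $H$ (parallel arcs allowed) that admits a feasible multicommodity flow for $\M$; since $u=n$ and an $n$-regular bipartite multigraph edge-decomposes into $n$ perfect matchings (K\"onig's edge-colouring theorem), such an $H$ is exactly a valid static configuration of the $n$ optical switches, one matching per switch. After normalizing by the link capacity $c$, I may assume arcs have unit capacity and that every row and column of $\M'=\tfrac1c\M$ sums to $\tfrac23 n$ (a $\tfrac23$-saturated, doubly stochastic matrix). The first step is to pin down the integer-residual decomposition: I would use the \emph{floor} one, $Int(\M')_{u,v}=\lfloor\M'_{u,v}\rfloor$ and $Res(\M')_{u,v}=\M'_{u,v}-\lfloor\M'_{u,v}\rfloor\in[0,1)$. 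Since switching any cell from floor to ceiling only decreases that cell's residual, the floor decomposition maximizes every residual row and column sum, so the hypothesis ``uniform-residual in $[\tfrac12,1]$ via some decomposition'' already holds for the floor one. Writing $i_u=\sum_v Int(\M')_{u,v}$, $i'_v=\sum_u Int(\M')_{u,v}$, this yields $i_u,i'_v\le\tfrac13 n$, and since $i_u+\sum_v Res(\M')_{u,v}=\tfrac23 n$ every residual row/column sum lies in $[\tfrac13 n,\tfrac23 n]$.

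I would build $H$ in two layers. Layer one: for every $(u,v)$ add $Int(\M')_{u,v}$ parallel arcs $u\to v$ and route the integer part over them in one hop (load $\le 1$ each), consuming $i_u\le\tfrac13 n$ out-arcs at $u$ and $i'_v\le\tfrac13 n$ in-arcs at $v$. Layer two uses the remaining $\ge\tfrac23 n$ arcs per node for the residual. If the support of $Res(\M')$ already has out-degree $\le n-i_u$ at each $u$ and in-degree $\le n-i'_v$ at each $v$ --- in particular whenever $Int(\M')=0$ (residual ratio $1$), since a row has at most $n$ positive entries --- then add one arc $u\to v$ per positive residual entry, route every residual entry directly (load $<1$), pad $H$ to $n$-regular with idle arcs, and finish.

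The substantive case is when some residual rows/columns are too dense to route one-hop within the degree budget. Here I would prune the residual support to a subgraph $D$ with out-degree $\le n-i_u$ and in-degree $\le n-i'_v$, keeping as much residual weight as possible; the feasible set is a transportation polytope (totally unimodular), so an integral, weight-optimal $D$ exists. Let $L$ be the residual restricted to the discarded arcs. Comparing $D$ against the natural fractional choice ``keep a fraction $\min\!\big(1,\tfrac{n-i_u}{p_u},\tfrac{n-i'_v}{q_v}\big)$ of each residual arc'' ($p_u,q_v$ the residual support degrees), and using that discarded entries are $<1$, that a row loses at most $p_u+i_u-n\le i_u$ arcs and a column at most $q_v+i'_v-n\le i'_v$ arcs, together with Cauchy--Schwarz applied to $\sum_u i_u=\sum_v i'_v\le\tfrac13 n^2$, one gets that the total discarded weight is $<\tfrac13 n^2$. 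Now set $H=(\text{layer-one arcs})\cup D$, pad to $n$-regular, and route $Int(\M')$ and $Res(\M')\setminus L$ directly; since each direct residual arc carries its own weight $<1$ and $i_u+(\text{residual row sum})=\tfrac23 n$, after a node spends its own $L$-traffic on first and last hops it is still left with exactly $\tfrac13 n$ units of relaying capacity in each direction, i.e. a global relay budget of $\tfrac13 n^2$, which strictly exceeds the discarded weight. Finally route $L$ over two-hop paths through this slack, dispersing each entry $L_{s,d}$ Valiant-style over the $\ge\tfrac13 n$ intermediaries that $s$ reaches and that reach $d$ on residual/idle arcs; a standard flow (Hall-type) argument then certifies that every arc stays at load $\le 1$. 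Decomposing $H$ into $n$ matchings by K\"onig completes the proof.

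The crux is this last case. The degree budget is essentially saturated by one-hop traffic, so the argument hinges on the discarded residual $L$ being provably small --- the bound ``total discarded weight $<\tfrac13 n^2$ versus relay budget $\tfrac13 n^2$'' is tight and works only because residual entries are $<1$ and the integer part contributes at most $\tfrac13 n$ per row and column --- and on showing that the two-hop routing of $L$ through the \emph{scattered} slack of the direct-residual and idle arcs is feasible arc-by-arc, not merely in aggregate.
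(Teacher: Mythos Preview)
Your construction is considerably more involved than the paper's and, for this interval, takes the harder route. The paper simply lets $H$ be the complete graph: with degree $n$ there is one arc per ordered pair, and since every residual entry is $<1$, the \emph{entire} residual matrix routes one-hop on its own arc. Only the integer part --- which here has row and column sums at most $\tfrac{n}{3}$ --- is sent over two hops, and the node-level bookkeeping closes in one line: $(\tfrac{n}{3}+x)\cdot 1 + (\tfrac{n}{3}-x)\cdot 2 = n - x \le n$. You invert the roles (floor direct via dedicated parallel arcs, residual pruned and partly indirect), which is precisely the device the paper reserves for the intervals $[\tfrac14,\tfrac12)$ and $[0,\tfrac14)$ where the floor dominates; importing it here forces the pruning, discarded-weight, and Hall-type machinery that the complete-graph construction avoids altogether. (Your opening observation that the floor decomposition maximizes all residual row/column sums, so the hypothesis transfers to it, is a nice point the paper leaves implicit.)

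There is also a gap in your accounting. The statement ``a row loses at most $p_u+i_u-n\le i_u$ arcs'' is the excess over the \emph{row} constraint in isolation; once you simultaneously enforce the column constraints, a feasible pruning may have to strip additional arcs from some rows to relieve overloaded columns, and the per-row bound no longer follows automatically. Your fractional comparison does not recover it either: bounding $1-\min\bigl(1,\tfrac{n-i_u}{p_u},\tfrac{n-i'_v}{q_v}\bigr)$ by the sum of the two deficits and using $p_u\le n$, $r_u\le\tfrac{2n}{3}$ yields discarded weight at most about $\tfrac{4}{9}n^2$, not $<\tfrac13 n^2$, and I do not see what Cauchy--Schwarz adds. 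Since your relay budget is \emph{exactly} $\tfrac13 n^2$, there is no slack to absorb any weakening, and the deferred arc-by-arc Hall step becomes moot. A simultaneous pruning with the per-row and per-column losses you want may well exist, but establishing it needs an explicit degree-constrained-subgraph argument you have not supplied; the quicker repair is to switch to the complete-graph construction for this interval.
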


\begin{proof}
Consider a ToR-to-ToR graph of degree $n$ that forms a complete graph. 
Note that each row and column in the residual matrix accounts for more than $\frac{1}{2}$ 
of the corresponding total row and column demand. 
Further, by definition, each entry in the residual matrix is strictly less than $1$. 
Consequently, in a complete graph where there is one link between every source-destination pair, at least $\frac{1}{2}$ of the demand corresponding 
to the residual matrix can be transmitted on a single-hop. 
This translates to at least $\geq \frac{1}{2} \cdot \frac{2 \cdot c \cdot n}{3} 
\geq \frac{c \cdot n}{3}$ of the demand from each source and towards 
each destination being transmitted on a single-hop.
Moreover, a load-balancing scheme can be devised such that even if the demand 
from the floor matrix is transmitted over paths of length 2, the total incoming 
and outgoing capacity utilized by each node will be at most $(\frac{c \cdot n}{3} + x) 
+ (\frac{c \cdot n}{3} - x) \cdot 2$, which is less than or equal to the total capacity $c \cdot n$. 
Here, $\frac{c \cdot n}{3} + x$ denotes the exact amount of demand 
in the residual matrix, $\frac{c \cdot n}{3} - x$ denotes the demand 
from the floor matrix (since the total is $\frac{2 \cdot c \cdot n}{3}$), 
and $x$ is greater than zero because every row and column 
in the floor matrix carries a fraction of the total demand that is less 
than $\frac{1}{2}$, i.e., less than $\frac{c \cdot n}{3}$ in total. 
This proves that a complete graph can support any demand matrix specified 
in Lemma~\ref{lem:lb-interval1}. Decomposing the complete graph into $n$ matchings, 
and executing one matching at each of the $n$ optical switches allows 
the demand-aware static network to emulate a complete graph, achieving 
a throughput of $\frac{2}{3}$.
\end{proof}

\begin{Lemma}\label{lem:lb-interval2}
Given any $\frac{2}{3}$-saturated demand matrix $\mathcal{M}$ that is uniform-residual (Definition~\ref{def:uniform-residual-demand}) within the interval $[\frac{1}{4}, \frac{1}{2})$, then a demand-aware static network of degree $n$ can fully satisfy the demand. 
\end{Lemma}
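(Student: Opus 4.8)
The plan follows the template of the proof of Lemma~\ref{lem:lb-interval1}, but with the floor and residual matrices exchanging roles: here the \emph{floor} is the bulky part and will be carried on direct, single-hop links, while the \emph{residual} is small enough to tolerate a length-$2$ detour. As in Lemma~\ref{lem:lb-interval1}, it suffices to route the normalized matrix $\M' = \tfrac1c\M$, all of whose rows and columns sum to $\tfrac{2n}{3}$, over some degree-$n$ ToR graph and then realize that graph as $n$ matchings, one per optical switch. Take the floor decomposition $\M' = Int(\M') + Res(\M')$ and write $R_u$ for a row (or column) sum of $Res(\M')$. The hypothesis gives $R_u\in[\tfrac n6,\tfrac n3)$, so the matching row/column sum of $Int(\M')$ is $I_u = \tfrac{2n}{3}-R_u\in(\tfrac n3,\tfrac n2]$; both endpoints of $[\tfrac14,\tfrac12)$ are used---$R_u\ge\tfrac n6$ forces $I_u\le\tfrac n2$ (the floor fits in half the degree), and $R_u<\tfrac n3$ gives $I_u+2R_u = \tfrac{2n}{3}+R_u < n$ (floor on direct links plus residual routed with bandwidth tax $2$ fits in the node capacity $n$).

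Concretely I would build $G = G_{\mathrm{fl}}\cup G_{\mathrm{res}}$ on the $n$ ToRs. $G_{\mathrm{fl}}$ installs $Int(\M')_{u,v}$ parallel links $u\to v$; since $Int(\M')$ is integer with row and column sums at most $\tfrac n2$, by K\"onig's edge-colouring theorem it is a union of at most $\lceil n/2\rceil$ partial matchings and occupies out- and in-degree at most $\tfrac n2$ at every ToR. The remaining $n-I_u\ge\tfrac n2$ out-links (and $n-\hat I_u$ in-links) per ToR form $G_{\mathrm{res}}$, which I would realize as a ``well-spread'' digraph with that degree sequence that additionally contains a direct link $u\to v$ for the $n-I_u$ largest residual entries of each row (and symmetrically for columns). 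Since $G$ is $n$-regular, K\"onig's theorem decomposes it into $n$ matchings, so a demand-aware static network of degree $n$ can emulate it. For the routing, send $Int(\M')$ in one hop on its direct links, send each residual entry that has a direct link in $G_{\mathrm{res}}$ in one hop as well, and route the remaining residual ``overflow'' over length-$2$ paths inside $G_{\mathrm{res}}$; since each row has at most $n-1$ nonzero entries and its $n-I_u$ largest residual entries are served directly, the overflow row sum satisfies $o_u < \tfrac n6$.

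Accounting at a ToR $u$, the out-capacity used is $I_u$ (floor) $+\,(R_u-o_u)$ (direct residual) $+\,o_u$ (overflow first hops) $+\,\tfrac1n\sum_w o_w$ (overflow relayed through $u$), which equals $\tfrac{2n}{3}+\tfrac1n\sum_w o_w < \tfrac{5n}{6} < n$, and symmetrically for in-capacity. More to the point, the room left on $u$'s $G_{\mathrm{res}}$-links beyond the directly-served residual, namely $(n-I_u)-(R_u-o_u) = \tfrac n3+o_u \ge \tfrac n3$, exceeds the $o_u+\tfrac1n\sum_w o_w < \tfrac n3$ of overflow traffic those links must carry; as the overflow flow is splittable it can therefore be load-balanced so that no link load exceeds $1$. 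Hence every entry of $\M'$ is served, $\tfrac23\M$ is feasible, and $\theta(\M)\ge\tfrac23$; decomposing $G$ into its $n$ matchings realizes this across the $n$ optical switches.

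The step I expect to be the main obstacle is the existence of $G_{\mathrm{res}}$ together with the load-balancing it must support: on the leftover---and in general non-uniform---degree one must simultaneously place the direct links for the large residual entries \emph{and} route every small matrix on length-$2$ paths within the stated per-link budget. I would discharge this with a flow/LP-feasibility argument (checking the relevant cuts using $R_u<\tfrac n3$ and $o_u<\tfrac n6$) rather than an explicit construction, and would take care that the floor decomposition is the one to which the uniform-residual hypothesis refers, since the identity $I_u+R_u=\tfrac{2n}{3}$ is what drives the capacity calculation.
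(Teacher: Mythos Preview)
Your approach is essentially the paper's: dedicate $Int(\M')_{u,v}$ direct links to the floor (at most $\tfrac{n}{2}$ ports per node since $I_u\le\tfrac{n}{2}$), build a $\sim\tfrac{n}{2}$-regular graph on the remaining ports for the residual, route whatever residual entry lands on a direct link in one hop and the rest in two, and verify $\tfrac{cn}{6}+2\cdot\tfrac{cn}{6}\le\tfrac{cn}{2}$ fits in the leftover capacity. The paper is in fact looser than you on the step you flagged as the obstacle---it simply takes a random regular graph of degree $\tfrac{n}{2}$ for $G_{\mathrm{res}}$ and asserts that half the residual is served on a single hop---so your K\"onig decomposition and LP/cut-feasibility framing are, if anything, more careful than what appears in the published argument.
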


\begin{proof}
We begin by decomposing the matrix $\mathcal{M}$ into floor and residual matrices. Note that each row and column in the floor matrix accounts for at least $\frac{1}{2}$ (residual is at most $\frac{1}{2}$) and at most $\frac{3}{4}$ (residual is at least $\frac{1}{4}$) of the corresponding total row and column demand. For every entry in the floor matrix with a value of value $x$, we add $x$ number of links between the corresponding source and destination. As a result, the entire demand represented by the floor matrix can be transmitted over single-hop. This approach ensures that at least $\frac{1}{2} \cdot \frac{2 \cdot c \cdot n}{3} \ge \frac{c \cdot n}{3}$ of the demand from every source and towards every destination is satisfied; it also utilizes at most $\frac{n}{2}$ links from each node, given that every row and column in the floor matrix sums up to at most $\frac{3}{4} \cdot \frac{2}{3} = \frac{1}{2}$. We are now left with the residual matrix and at least $\frac{n}{2}$ links at each node. We construct a random regular graph of degree $\frac{n}{2}$. A link between a source-destination pair fully satisfies the portion of the demand specified by the corresponding entry in the residual matrix (since every entry is strictly less than $1$). As a result, the random regular graph satisfies at least $\frac{1}{2}$ of the demand in the residual matrix on single-hop \ie $\ge \frac{1}{2}\cdot\frac{c\cdot n}{3}\ge \frac{c\cdot n}{6}$ demand from each row and from each column. The rest of the demand ($\frac{c\cdot n}{6}$) can be transmitted in $2$-hop paths, essentially consuming at most $(\frac{c\cdot n}{6})+(\frac{c\cdot n}{6})\cdot 2 = \frac{c\cdot n}{2}$ from each node which is within the budget of $\frac{n}{2}$ links to satisfy the residual matrix. Overall, both the floor and residual matrices can be transmitted within the capacity limits of the network.
\end{proof}

\begin{Lemma}\label{lem:lb-interval3}
Given any $\frac{2}{3}$-saturated demand matrix $\mathcal{M}$ that is uniform-residual (Definition~\ref{def:uniform-residual-demand}) within the interval $[0, \frac{1}{4})$, then a demand-aware static network of degree $n$ can fully satisfy the demand. 
\end{Lemma}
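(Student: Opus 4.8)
The plan is to follow the template of Lemma~\ref{lem:lb-interval2}, adjusting the constants to the thinner residual budget available in this regime. First I would take the integer-residual decomposition (Definition~\ref{def:integer-residual-decomp}) of the normalized matrix $\M^\prime=\frac1c\M$ into $Int(\M^\prime)$ and $Res(\M^\prime)$. Since the residual ratios lie in $[0,\frac14)$, every row and column of $Res(\M^\prime)$ sums to at most $\frac14\cdot\frac{2n}{3}=\frac n6$, hence every row and column of $Int(\M^\prime)$ sums to at most $\frac{2n}{3}$. For every entry of $Int(\M^\prime)$ with value $x$ I would add $x$ parallel direct links between the corresponding source and destination; this carries the entire floor part on single hops while using at most $\frac{2n}{3}$ links at each node, so every node is left with at least $\frac n3$ spare links, i.e.\ spare capacity $\frac{cn}{3}$.

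Next I would route the residual using only the spare links, by building a (near-)$\frac n3$-regular random graph $G$ on the ToRs, exactly as in Lemma~\ref{lem:lb-interval2}. Because a uniformly random $d$-regular graph places an edge on a fixed pair with probability $\frac{d}{n-1}\ge\frac13$, and every residual entry is strictly below $1\le c$, a direct link of $G$ can carry the full residual entry of its endpoints; thus at least a $\frac13$ fraction of every row and column of $Res(\M)$ is delivered on single hops, and the remaining at most $\frac23$ of every row/column is sent over $2$-hop paths in $G$ with Valiant-style uniform load balancing over common neighbors. Bounding the load on a node's $\frac n3$ outgoing spare links: residual originating at the node (whether it takes one hop or the first of two) contributes at most its full row sum $\frac{cn}{6}$, while residual transiting through the node as a second hop contributes at most $\tfrac1n$ of the total $2$-hop mass, i.e.\ at most $\tfrac1n\cdot\bigl(\tfrac23\cdot n\cdot\tfrac{cn}{6}\bigr)=\tfrac{cn}{9}$. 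Hence the load is at most $\tfrac{cn}{6}+\tfrac{cn}{9}=\tfrac{5cn}{18}<\tfrac{cn}{3}$, within budget (and symmetrically for incoming links), so all of $\M$ is carried within the capacity constraints. Finally I would take the ToR-to-ToR graph to be the union of the floor links and $G$, pad it with unused links so that every node has degree exactly $n$, and -- this multigraph being bipartite between ToR input and output ports -- decompose it into $n$ perfect matchings, one per optical circuit switch, exhibiting the required demand-aware static network of degree $n$ and thus throughput $\frac23$ in this regime.

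The step I expect to be the main obstacle is the residual routing on $G$: here $G$ has degree only $\frac n3$, much sparser than the degree-$\frac n2$ graph of Lemma~\ref{lem:lb-interval2}, so the claims that $G$ simultaneously covers a $\ge\frac13$ fraction of each of the $2n$ rows and columns of $Res(\M)$, that every $2$-hop pair has enough (nonzero, in fact about $\frac{(n/3)^2}{n}=\frac n9$) common neighbors in $G$, and that the uniform load balancing concentrates around its expectation, all require Chernoff/Azuma-type bounds together with a union bound over the $O(n^2)$ pairs, valid only for $n$ sufficiently large. The $\frac{cn}{18}$ slack in the capacity count above is exactly what leaves room to absorb these lower-order deviations, as well as minor integrality and parity issues in choosing the degree of $G$.
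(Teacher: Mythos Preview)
Your approach is essentially the same as the paper's: decompose into floor and residual, realize $Int(\M')$ with at most $\frac{2n}{3}$ direct links per node, and route $Res(\M')$ on a $\frac{n}{3}$-regular graph built from the spare links. The one place you diverge is in the residual analysis, and there you work much harder than necessary.

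The paper's key observation in this regime is that the residual is so small---at most $\frac{cn}{6}$ per row and column---that you can afford to send \emph{all} of it over $2$-hop paths: the total bandwidth consumed at each node is then at most $2\cdot\frac{cn}{6}=\frac{cn}{3}$, exactly the spare capacity. There is no need to first peel off a single-hop fraction, no need to argue that a random $\frac{n}{3}$-regular graph covers $\ge\frac13$ of every row and column, and therefore no Chernoff/Azuma concentration step and no ``sufficiently large $n$'' caveat. Your computation of $\frac{5cn}{18}$ is not wrong, but the slack it produces is unnecessary, and the probabilistic machinery you flag as the ``main obstacle'' simply does not arise in the paper's argument.
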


\begin{proof}
Our proof follows a methodology similar to that of Lemma~\ref{lem:lb-interval2}. We begin by decomposing the matrix $\mathcal{M}$ into floor and residual matrices. Note that each row and column in the floor matrix accounts for at least $\frac{3}{4}$ (residual is at most $\frac{1}{4}$) and at most the entire portion (residual is at least $0$) of the corresponding total row and column demand. For every entry in the floor matrix with a value of value $x$, we add $x$ number of links between the corresponding source and destination. As a result, the entire demand represented by the floor matrix can be transmitted over single-hop. This approach ensures that at least $\frac{3}{4} \cdot \frac{2 \cdot c \cdot n}{3} \ge \frac{c \cdot n}{2}$ of the demand from every source and towards every destination is satisfied. Additionally, it utilizes at most $\frac{2\cdot n}{3}$ links from each node, given that every row and column in the floor matrix sums up to at most $1 \cdot \frac{2}{3} = \frac{2}{3}$. We are now left with the residual matrix and at least $\frac{n}{3}$ links at each node. Note that each row and column in the residual matrix accounts for at most $\frac{1}{4}$ portion of the corresponding row and column demand \ie at most $\frac{1}{4}\cdot\frac{2\cdot c\cdot n}{3}= \frac{c\cdot n}{6}$ demand. By constructing a regular graph of degree $\frac{n}{3}$, the residual demand can be transmitted within capacity limits even if all the demand is transmitted on $2$-hop indirect paths \ie $\frac{c\cdot n}{6}\cdot 2 \le \frac{n}{3}$. 
\end{proof}

The throughput lower bound follows from our results in Lemma~\ref{lem:lb-interval1},~\ref{lem:lb-interval2},~\ref{lem:lb-interval3}.

\begin{theorem}[Lower bound for demand-aware static RDCNs]\label{th:lb-static}
The throughput of a demand-aware static network with $n$ ToRs each with $u=n$ incoming and outgoing links, is lower bounded by $\frac{2}{3}$ under uniform-residual demand matrices.
\end{theorem}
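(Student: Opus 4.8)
The plan is to obtain Theorem~\ref{th:lb-static} as an essentially immediate corollary of the three case lemmas, after a reduction that puts the general statement into the $\frac{2}{3}$-saturated, doubly-stochastic setting in which those lemmas are phrased. Concretely, by Definition~\ref{def:throughput} it suffices to show $\theta(\M)\ge\frac{2}{3}$ for every uniform-residual $\M$, i.e.\ that the scaled matrix $\frac{2}{3}\M$ is feasible in some degree-$n$ demand-aware static topology.

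First I would carry out the reduction. Invoking the standard fact already used in the proof of Theorem~\ref{th:throughput-ideal} that saturated demand matrices are the worst case, I may assume every row and column of $\M$ sums to the full node capacity $c\cdot n$, so every row and column of $\frac{2}{3}\M$ sums to at most $\frac{2\cdot c\cdot n}{3}$. As noted in the footnote of \S\ref{sec:demand-aware-static}, I can then augment $\frac{2}{3}\M$ by a non-negative demand matrix to make it exactly doubly stochastic with row/column sums $\frac{2\cdot c\cdot n}{3}$ (a $\frac{2}{3}$-saturated matrix); since augmentation only adds demand, feasibility of the augmented matrix implies feasibility of $\frac{2}{3}\M$. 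The point that needs care here is that the augmentation can be done so that the result is still uniform-residual in the sense of Definition~\ref{def:uniform-residual-demand}, i.e.\ every row/column residual ratio stays inside one of the three intervals $[0,\frac14)$, $[\frac14,\frac12)$, $[\frac12,1]$.

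Next, given a $\frac{2}{3}$-saturated uniform-residual matrix, Definition~\ref{def:uniform-residual-demand} guarantees that all its residual ratios lie in exactly one of the three intervals, so exactly one of Lemma~\ref{lem:lb-interval1} (interval $[\frac12,1]$), Lemma~\ref{lem:lb-interval2} (interval $[\frac14,\frac12)$), Lemma~\ref{lem:lb-interval3} (interval $[0,\frac14)$) applies; each of these already builds a degree-$n$ ToR-to-ToR graph (a complete graph, or direct links for the floor matrix plus a regular graph of degree $\frac{n}{2}$ or $\frac{n}{3}$ for the residual) together with a single-hop/two-hop load-balancing routing that respects each node's capacity $c\cdot n$. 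The one remaining step is to realize such a degree-$n$ ToR-to-ToR (multi)graph on the physical architecture of $u=n$ optical circuit switches, each executing a single matching: viewing the graph as an $n$-regular bipartite multigraph between out-ports and in-ports, König's edge-coloring theorem decomposes it into exactly $n$ perfect matchings, and assigning the $i$-th matching to the $i$-th switch gives a valid static configuration emulating the graph. This is exactly why the analysis fixes $u=n$. Combining the three cases then yields $\theta\ge\frac{2}{3}$ for every uniform-residual $\M$.

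I expect the genuine mathematical content to reside entirely in the three lemmas, so the theorem itself is bookkeeping. The two places where I would be most careful are \emph{(i)} verifying that the reduction to a $\frac{2}{3}$-saturated matrix preserves uniform-residuality (rather than merely the row/column sum bound), and \emph{(ii)} the decomposition of the degree-$n$ regular graph into exactly $n$ matchings, which is where the hypothesis $u=n$ is essential and which will be reused in the periodic-network analysis of \S\ref{sec:demand-aware-periodic}.
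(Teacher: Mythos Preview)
Your proposal is correct and matches the paper's approach exactly: the paper's entire ``proof'' of Theorem~\ref{th:lb-static} is the single sentence ``The throughput lower bound follows from our results in Lemma~\ref{lem:lb-interval1},~\ref{lem:lb-interval2},~\ref{lem:lb-interval3},'' with the reduction to $\frac{2}{3}$-saturated doubly-stochastic matrices relegated to the footnote you cite and the matching decomposition handled inside the lemma proofs. You have in fact been more careful than the paper in flagging that the augmentation must preserve uniform-residuality and that the degree-$n$ graph must be edge-decomposed into $n$ perfect matchings (the paper only spells the latter out explicitly for Lemma~\ref{lem:lb-interval1}).
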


So far, our analysis suggests that demand-aware static networks can achieve at least a throughput of $\frac{2}{3}$ for uniform-residual demand matrices. We next focus on the upper bound for the throughput of such networks. In order to prove an upper bound $\theta^*$ on the throughput, it is sufficient to show that there exists a demand matrix such that the network cannot support more than $\theta^*$ throughput. 

\begin{theorem}[Upper bound]\label{th:ub-static}
The throughput of demand-aware static networks with $n$ ToRs each with $u=n$ incoming and outgoing links, is upper bounded by $\frac{4}{5}$.
\end{theorem}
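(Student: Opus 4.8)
The plan is to prove the bound via a single adversarial instance: exhibit one saturated uniform-residual demand matrix $\M^*$ (normalised by $c$, and, as in \S\ref{sec:demand-aware-static}, taken doubly stochastic with row/column sums $n$) and show that no demand-aware static network of degree $n$ supports $\theta\M^*$ for any $\theta>\tfrac45$. By Theorem~\ref{th:throughput-integer} the matrix must have a genuine residual part, so I would look for $\M^*$ of a two-level shape: every ToR sends a demand exceeding one link's worth \ie an entry $1+\rho$ (floor $1$, residual $\rho$) to a structured set of roughly $\alpha n$ \emph{preferred} destinations, and a residual-only demand to the remaining destinations, with $\alpha$ and $\rho$ tuned so that the matrix is regular (hence uniform-residual per Definition~\ref{def:uniform-residual-demand}) and as hard as possible for a bounded-degree graph. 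The preferred neighbourhoods would be drawn from a symmetric combinatorial design so that the estimate below does not depend on which matchings the network installs.

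The core step is a bandwidth-tax accounting applied to an arbitrary degree-$n$ ToR-to-ToR multigraph (equivalently, $n$ matchings) and an arbitrary feasible routing of $\theta\M^*$. A node's total outgoing capacity is $n$, and the capacity it actually consumes equals the flow-weighted sum of path lengths, that is, at least its scaled demand plus one extra unit for every flow unit not carried on a direct link (the tax). To lower-bound the tax I would observe that the volume a node delivers in one hop to a given destination is at most the number of direct links it devotes there; since it has only $n$ links in total and every scaled preferred entry exceeds $1$ (so needs $\ge 2$ links to be served in one hop), a knapsack-type argument caps the single-hop fraction of a node's demand and forces at least a $\tfrac14$ fraction onto paths of length $\ge 2$ once $\theta>\tfrac45$, which makes the consumed capacity exceed $n$ --- a contradiction. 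I would then verify tightness by exhibiting, on the same $\M^*$, a topology and routing attaining $\theta=\tfrac45$ (a complete or random-regular graph with two-hop load balancing, in the spirit of Lemmas~\ref{lem:lb-interval1}--\ref{lem:lb-interval3}), which both confirms the constant and shows the construction is the right one.

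The main obstacle is forcing the tax down to exactly the $\tfrac14$ fraction that yields $\tfrac45$, rather than to a weaker constant: a pure capacity count lets the network sidestep ``doubling up'' links on the preferred pairs by spreading its $n$ links thinly (one per destination) and relaying the small overflow, which can beat $\tfrac45$ --- the naive versions of the argument only give constants such as $\tfrac67$ or $\tfrac89$. Ruling out this link-spreading strategy requires choosing $\M^*$ so that the relay paths it would need are themselves congested or structurally unavailable, \ie the proof must fuse the global capacity bound with a topology-independent routing lower bound tailored to the neighbourhood design. Pinning down $\alpha$ and $\rho$ so that the capacity obstruction and the routing obstruction coincide precisely at $\tfrac45$ is the delicate part of the argument.
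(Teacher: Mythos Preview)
Your high-level strategy --- exhibit a single saturated hard instance and bound throughput via a bandwidth-tax / single-hop accounting --- is exactly the paper's. What you are missing is the concrete matrix, which turns out to be much simpler than you anticipate and sits squarely inside your own template. The paper takes the ``chessboard'' matrix
\[
\M^{*}_{s,d}=\begin{cases}0.5 & s\equiv d \pmod 2,\\ 1.5 & s\not\equiv d \pmod 2,\end{cases}
\]
i.e.\ your two-level shape with $\alpha=\tfrac12$, $\rho=\tfrac12$, and --- crucially --- the \emph{same} residual $0.5$ on the non-preferred destinations. Because every entry has fractional part exactly $\tfrac12$, the two extremal link layouts you worry about become indistinguishable: spreading one link to each of the $n$ destinations serves $\tfrac{n}{2}\cdot 0.5+\tfrac{n}{2}\cdot 1=\tfrac{3n}{4}$ of the row single-hop, while doubling up two links on each $1.5$-destination serves $\tfrac{n}{2}\cdot 1.5=\tfrac{3n}{4}$ as well; a greedy/knapsack check shows no allocation of $n$ outgoing links beats $\tfrac{3n}{4}$. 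Your ``link-spreading'' obstacle therefore evaporates by construction --- no combinatorial design on the preferred neighbourhoods and no auxiliary routing obstruction are needed.

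The paper's accounting is then the one-line version: at most $\tfrac{3n}{4}$ of each row's demand rides a single hop, so the matrix forces at least $\tfrac{3n}{4}\cdot 1+\tfrac{n}{4}\cdot 2=\tfrac{5n}{4}$ capacity per node against a budget of $n$, whence the feasible scaling factor is at most $\tfrac45$. The paper does not separately exhibit a $\tfrac45$-achieving topology inside this proof; that is left to the empirical section. So your plan is sound, but the ``delicate'' parameter search you flag has the closed-form answer $\alpha=\tfrac12$ and $\rho=\text{(non-preferred residual)}=\tfrac12$; once the residuals are equalised across all destinations, the weaker $\tfrac67$ or $\tfrac89$ constants you feared never arise and the argument collapses to a few lines.
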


\begin{proof}
We prove our claim using a demand matrix $\mathcal{M}$ that specifies $0.5$ and $1.5$ demand (normalized by capacity $c$) alternatively in every row and column as follows:
\begin{align*}
 \mathcal{M}=&
    \begin{bmatrix}
        0.5 & 1.5 & 0.5 &... & 1.5\\
        1.5 & 0.5 & 1.5 &... & 0.5\\
        \vdots & \vdots & \vdots & \vdots & \vdots\\
        0.5 & 1.5 & 0.5 &... & 1.5\\
        1.5 & 0.5 & 1.5 &... & 0.5\\
    \end{bmatrix}
\end{align*}
The above demand matrix is saturated \ie the sum of every row and column equals $n$, the total number of incoming and outgoing links at each node. By greedily adding links between source-destination pairs, at most $\frac{3 \cdot n}{4}$ of the total demand from each row and column can be satisfied in a single-hop. This results in at least $\frac{n}{4}$ of the demand from every row and column requiring transmission over at least 2-hops. As a result, the above demand matrix consumes at least $(\frac{3 \cdot n}{4}) \cdot 1 + (\frac{n}{4}) \cdot 2 = \frac{5 \cdot n}{4}$ of the total capacity for each node, while the total capacity available is only $n$. Therefore, the maximum scaling factor required for the demand matrix to be feasible within the capacity limits is at most $\frac{4}{5}$.
\end{proof}

\subsection{Demand-aware Periodic Networks}\label{sec:demand-aware-periodic}

We now introduce demand-aware \emph{periodic} networks based on \emph{fixed-duration} reconfigurations. These networks are similar to demand-oblivious reconfigurable networks such as RotorNet~\cite{10.1145/3098822.3098838} and Sirius~\cite{10.1145/3387514.3406221} but the periodic circuit switching schedule is derived based on the demand matrix\footnote{Circuit switches in RotorNet (tunable lasers in Sirius) execute a schedule that is installed at initialization time and cannot be changed (or configured) at run-time, irrespective of the underlying demand matrix.}. For instance, in an architecture like RotorNet, we assume that the control plane measures the demand matrix and computes a \emph{periodic} switching schedule, where each matching in the schedule is executed by the optical circuit switches for a fixed-duration of time and each switch takes a fixed amount of time to reconfigure to the next matching in the schedule.
This can be extended to an architecture like Sirius by interpreting the switching schedule computed by the control plane as the schedule for tuning the lasers such that the same set of matchings are achieved. Demand-aware periodic networks are particularly attractive due to their simplicity and the capability of their circuit switches to reconfigure at nanosecond timescales. These networks are also practically realizable, provided they incorporate additional functionality that enables the updating of switching schedules in rotor switches (or tunable lasers) during run-time. 

Understanding the throughput of demand-aware periodic networks directly establishes a lower bound for the throughput of demand-aware reconfigurable networks as a whole. Our throughput analysis of demand-aware periodic networks relies on an important result from prior work that states: the throughput of a periodic network is equivalent to the throughput of a static graph obtained from the union of graphs (of the periodic network) over one period of time~\cite{10.1145/3579312}. Specifically, let $\mathcal{G}_t=(V, \mathcal{E}_t)$ denote the ToR-to-ToR graph at timeslot $t$ of a periodic network and let $\Gamma$ denote the period of the circuit switching schedule. The circuit switches implement a matching for $\Delta$ duration of time (one timeslot) and it takes $\Delta_r$ amount of time to reconfigure to the next matching. Prior work~\cite{10.1145/3579312} proves that the throughput of a periodic network represented by $\mathcal{G}_t$ is equivalent to a static graph $G=(V, E)$, where $E=\cup_{t\in[0,\Gamma)} \mathcal{E}_t$. The capacity of each link in the static graph is $\frac{c}{\Gamma}$, where $c$ is the capacity of each link in the original periodic graph. As a consequence of this result, we obtain the following Corollary that establishes the relation between the throughput of demand-aware static and demand-aware periodic networks. 

\begin{corollary}\label{cor:static-periodic}
The throughput of a demand-aware static network with $n$ ToRs, each with $n$ incoming and outgoing links with capacity $\frac{c}{\Gamma}$ is equivalent to the throughput of a demand-aware periodic network with $n$ ToRs, each with $u\le n$ incoming and outgoing link with capacity $c$, where $\Gamma=\frac{n}{u}$ is the period of the periodic schedule.
\end{corollary}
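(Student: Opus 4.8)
The plan is to apply the prior-work equivalence result (the one cited from \cite{10.1145/3579312}) essentially as a black box, matching up parameters carefully. Recall that result states: a periodic network described by graphs $\mathcal{G}_t = (V, \mathcal{E}_t)$ with period $\Gamma$, link capacity $c$, reconfiguration time $\Delta_r$ per matching, has throughput equal to that of the static graph $G = (V, E)$ with $E = \bigcup_{t \in [0,\Gamma)} \mathcal{E}_t$ and each link of capacity $\frac{c}{\Gamma}$. So the first step is to instantiate this for a demand-aware periodic network on $n$ ToRs, each with $u \le n$ links of capacity $c$, and argue that over one period the union graph is exactly the degree-$n$ ToR-to-ToR graph whose throughput appears in the static-network analysis of \S\ref{sec:demand-aware-static}.

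Concretely, I would argue as follows. In a periodic network with $u$ circuit switches, each switch cycles through $\frac{n}{u}$ matchings over one period, so $\Gamma = \frac{n}{u}$ timeslots. Each matching contributes a degree-$1$ graph on $V$ per switch; summing over the $u$ switches and over the $\Gamma = \frac{n}{u}$ timeslots, the union $E = \bigcup_{t} \mathcal{E}_t$ carries $u \cdot \frac{n}{u} = n$ (multi-)edges incident to each ToR. That is precisely a degree-$n$ ToR-to-ToR graph — the object whose throughput Lemmas~\ref{lem:lb-interval1}--\ref{lem:lb-interval3} and Theorems~\ref{th:lb-static}, \ref{th:ub-static} control. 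A demand-aware controller is free to choose the $n$ matchings (equivalently, the degree-$n$ union graph) as a function of the demand matrix, so the set of realizable union graphs for the periodic network coincides with the set of realizable topologies for the degree-$n$ demand-aware static network. By the equivalence result, the periodic network's throughput equals the throughput of this union graph viewed as a static graph with per-link capacity $\frac{c}{\Gamma} = \frac{c}{n/u} = \frac{c u}{n}$. Setting the static network's link capacity to match, i.e. calling it $\frac{c}{\Gamma}$ with $\Gamma = \frac{n}{u}$, gives exactly the claimed correspondence: the demand-aware static network on $n$ ToRs of degree $n$ with link capacity $\frac{c}{\Gamma}$ has the same throughput as the demand-aware periodic network on $n$ ToRs of degree $u$ with link capacity $c$.

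The steps in order: (i) state the cited equivalence precisely and note it is throughput-preserving regardless of $\Delta_r$ (throughput being a steady-state scaling factor); (ii) compute $\Gamma = \frac{n}{u}$ and show the one-period union of a $u$-switch periodic schedule is a degree-$n$ ToR-to-ToR graph; (iii) observe that demand-awareness means the controller optimizes over exactly the same family of degree-$n$ graphs in both models, so the optimal (and worst-case-over-demand) throughputs agree; (iv) track the capacity bookkeeping to conclude the per-link capacity in the equivalent static graph is $\frac{c}{\Gamma}$, matching the Corollary's statement.

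The main obstacle is step (iii) — making the "the controller optimizes over the same family" argument airtight. One must be careful that a periodic demand-aware schedule is not just \emph{a} degree-$n$ union graph but that \emph{every} degree-$n$ ToR-to-ToR graph on $n$ nodes decomposes into $n$ perfect matchings (so it can be realized as $\frac{n}{u}$ matchings on each of $u$ switches); this is a standard edge-coloring / Birkhoff-type fact for regular (multi)graphs, but it needs to be invoked. Conversely, any such schedule yields a degree-$n$ union graph, so the two feasible sets genuinely coincide and the throughput — defined in Definition~\ref{def:throughput} as a min over demand matrices of a max over feasible flows — is identical on both sides. Once that correspondence is established, the Corollary is immediate, and it is what lets us port Theorem~\ref{th:lb-static} and Theorem~\ref{th:ub-static} over to periodic networks to get the $\frac{2}{3}$ lower bound and $\frac{4}{5}$ upper bound advertised in the introduction.
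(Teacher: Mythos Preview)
Your proposal is correct and follows essentially the same approach as the paper: both invoke the cited periodic-to-static equivalence from \cite{10.1145/3579312} as a black box, compute $\Gamma = n/u$, and use the decomposition of a degree-$n$ regular (multi)graph into $n$ perfect matchings to show the feasible sets coincide. If anything, your step (iii) is slightly more careful than the paper's own proof, which argues only the static $\to$ periodic direction explicitly and leaves the converse implicit in the cited equivalence.
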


\begin{proof}
The demand-aware static network outlined in Corollary~\ref{cor:static-periodic} can be represented as a ToR-to-ToR directed graph of degree $n$ with link capacities $\frac{c}{\Gamma} = \frac{c \cdot u}{n}$. Considering a demand-aware periodic network with $n$ ToRs, each having $u \le n$ incoming (and outgoing) links of capacity $c$, we utilize the aforementioned static graph to derive the switching schedule for the periodic network. Since the static graph is regular, meaning that each ToR has $n$ incoming and outgoing links, it can be decomposed into $n$ perfect matchings. By shuffling these matchings and installing $\frac{n}{u}$ of them in each of the $u$ optical circuit switches, we ensure that their union reconstructs the original static graph.
As a result, the periodic network effectively \emph{emulates} a static graph identical to the ToR-to-ToR graph of the demand-aware static network with link capacity $\frac{c}{\Gamma}$. Here, $\Gamma$, the period of the switching schedule, is $\frac{n}{u}$, as each switch sequentially executes $\frac{n}{u}$ matchings in a periodic manner.
The proof follows since the throughput of a periodic graph and its corresponding static emulated graph are equivalent~\cite{10.1145/3579312}.
\end{proof}

Although our analysis of demand-aware static networks (\S\ref{sec:demand-aware-static}) is confined to scenarios with a large number of incoming and outgoing links at each ToR (degree $= n$), it is relevant to the throughput of demand-aware periodic networks for \emph{any} degree $u\le n$. Based on Corollary~\ref{cor:static-periodic}, we can now analyze the throughput of demand-aware periodic networks of any degree. We obtain the following Corollaries on the throughput of demand-aware periodic networks, as a direct consequence of our results in Theorem~\ref{th:lb-static} and Corollary~\ref{cor:static-periodic}. Our formal proof appears in Appendix~\ref{sec:deferred-proofs}.

\begin{corollary}[Lower bound]\label{cor:throughput-periodic-lb}
The throughput of demand-aware periodic networks is lower bounded by $\frac{2}{3}$ under uniform-residual demand matrices (Definition~\ref{def:uniform-residual-demand}).
\end{corollary}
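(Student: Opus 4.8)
The plan is to combine Theorem~\ref{th:lb-static} with Corollary~\ref{cor:static-periodic} directly. Corollary~\ref{cor:static-periodic} tells us that a demand-aware periodic network with $n$ ToRs, each with $u \le n$ links of capacity $c$ and period $\Gamma = \frac{n}{u}$, has throughput equal to that of a demand-aware static network with $n$ ToRs, each with $n$ links of reduced capacity $\frac{c}{\Gamma} = \frac{c\cdot u}{n}$. So the whole argument reduces to showing that the static-network lower bound of Theorem~\ref{th:lb-static} is insensitive to the value of the per-link capacity: the bound was stated for link capacity $c$, and we need it for link capacity $\frac{c}{\Gamma}$.

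The key observation to make this rescaling rigorous is that throughput, as defined in Definition~\ref{def:throughput}, is a scale-free quantity: if we multiply both the link capacities and the demand matrix by the same positive constant, the set of feasible (scaled) flows is unchanged, so $\theta(\M)$ is unchanged. Concretely, I would argue as follows. Fix any uniform-residual demand matrix $\M$ for the periodic network (with link capacity $c$). By Corollary~\ref{cor:static-periodic}, its throughput equals the throughput of $\M$ on the degree-$n$ static network with link capacity $\frac{c}{\Gamma}$. Now rescale: a flow is feasible for $\M$ on a network with link capacity $\frac{c}{\Gamma}$ if and only if the same flow, multiplied by $\Gamma$, is feasible for $\Gamma\cdot\M$ on a network with link capacity $c$. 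Hence the throughput of $\M$ on the capacity-$\frac{c}{\Gamma}$ static network equals the throughput of $\Gamma\cdot\M$ on the capacity-$c$ static network of degree $n$. Finally, $\Gamma\cdot\M$ is uniform-residual whenever $\M$ is — the defining condition in Definition~\ref{def:uniform-residual-demand} is stated in terms of \emph{ratios} of row/column sums of $Res$ and $\M$ after normalization by capacity, and these ratios are invariant under multiplying the matrix by a positive constant (one can take $Int(\Gamma\cdot\M) = \Gamma\cdot Int(\M)$ when $\Gamma$ is an integer, which it is since $\Gamma = \frac{n}{u}$ with $u \mid n$ in this setting, so the integer-residual structure scales cleanly). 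Therefore Theorem~\ref{th:lb-static} applies to $\Gamma\cdot\M$ on the capacity-$c$, degree-$n$ static network and gives throughput at least $\frac{2}{3}$, which transfers back to the periodic network.

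The main thing to be careful about — and what I expect is the only real obstacle — is the bookkeeping around the capacity rescaling and the claim that $\Gamma\cdot\M$ (or equivalently the capacity reduction) preserves the uniform-residual property. Theorem~\ref{th:lb-static} is phrased for a generic link capacity $c$, so one could also simply instantiate it with $\frac{c}{\Gamma}$ playing the role of "$c$" directly, provided one checks that the uniform-residual hypothesis in Definition~\ref{def:uniform-residual-demand} is about the matrix \emph{normalized by whatever the link capacity is}; since it is, the check is immediate and no genuine rescaling argument is even needed. Either way, the substantive content is entirely in Theorem~\ref{th:lb-static} and Corollary~\ref{cor:static-periodic}; this corollary is a two-line composition once the capacity/normalization consistency is spelled out. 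I would present it in that order: invoke Corollary~\ref{cor:static-periodic} to pass to a static network, note that uniform-residual is a property of the capacity-normalized matrix and hence unaffected by the capacity change, then apply Theorem~\ref{th:lb-static}.
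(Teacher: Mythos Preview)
Your proposal is correct and takes essentially the same approach as the paper: invoke Corollary~\ref{cor:static-periodic} to reduce to the degree-$n$ static network with reduced link capacity, then apply Theorem~\ref{th:lb-static}. The paper's own proof is a single sentence that does exactly this composition without spelling out the capacity-rescaling consistency you carefully check, so your version is if anything more complete.
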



\begin{corollary}[Upper bound]\label{cor:throughput-periodic-ub}
The throughput of demand-aware periodic networks is upper bounded by $\frac{4}{5}$.
\end{corollary}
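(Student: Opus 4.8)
The plan is to derive Corollary~\ref{cor:throughput-periodic-ub} as a direct consequence of the upper bound for demand-aware static networks (Theorem~\ref{th:ub-static}) together with the equivalence between static and periodic networks (Corollary~\ref{cor:static-periodic}). The key observation is that an upper bound on throughput is a statement about the existence of a single bad demand matrix, and that the static-periodic equivalence preserves throughput exactly; hence any demand matrix that defeats the static network at scaling factor better than $\frac{4}{5}$ also defeats the emulated periodic network at the same scaling factor.

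Concretely, first I would invoke Corollary~\ref{cor:static-periodic}: for a demand-aware periodic network with $n$ ToRs, each with $u\le n$ links of capacity $c$ and period $\Gamma=\frac{n}{u}$, its throughput equals that of a demand-aware static network with $n$ ToRs, each with $n$ links of capacity $\frac{c}{\Gamma}$. Since throughput (the maximum feasible scaling factor, Definition~\ref{def:throughput}) is invariant under a uniform rescaling of all link capacities together with the demand matrix, the $\frac{c}{\Gamma}$ capacity plays no role in the worst-case scaling factor --- equivalently, one can rescale so that each link has capacity $c$ again. Therefore the periodic network's throughput is upper bounded by the throughput of a demand-aware static network of degree $n$. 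Second, I would apply Theorem~\ref{th:ub-static}, which exhibits the explicit alternating $0.5/1.5$ demand matrix $\mathcal{M}$ on which no degree-$n$ static network can achieve throughput exceeding $\frac{4}{5}$, because greedily adding direct links satisfies at most $\frac{3n}{4}$ of each row/column in one hop, forcing at least $\frac{n}{4}$ onto $2$-hop paths and consuming at least $\frac{5n}{4}$ units of capacity per node against a budget of $n$. Combining the two steps yields that the periodic network, which emulates exactly such a static graph, also cannot exceed $\frac{4}{5}$ on $\mathcal{M}$, establishing the claimed upper bound.

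The only mild subtlety --- and the step I would be most careful about --- is confirming that the worst-case demand matrix used in Theorem~\ref{th:ub-static} is legitimate in the periodic setting: it must lie within the hose model for the periodic network's capacities, and the static-to-periodic emulation in Corollary~\ref{cor:static-periodic} must genuinely reproduce \emph{every} degree-$n$ static graph (so that no periodic schedule can do better than the best static graph). The first point is immediate after the capacity rescaling; the second follows from the fact that any $n$-regular ToR-to-ToR graph decomposes into $n$ perfect matchings which can be distributed $\frac{n}{u}$ per switch across the $u$ optical switches, exactly as argued in the proof of Corollary~\ref{cor:static-periodic}, so the family of graphs realizable by a periodic schedule over one period coincides with the family of degree-$n$ static graphs. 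Since there is essentially no calculation beyond what Theorem~\ref{th:ub-static} already provides, I expect the proof to be short; the ``hard part'' is purely bookkeeping --- making explicit that throughput transfers across the equivalence and that capacity scaling is harmless.
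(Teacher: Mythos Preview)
Your proposal is correct and takes essentially the same approach as the paper: the paper's own proof is a two-line argument that simply invokes Theorem~\ref{th:ub-static} together with Corollary~\ref{cor:static-periodic}, exactly as you outline. Your additional remarks about capacity rescaling and about the periodic schedule realizing precisely the degree-$n$ static graphs are sound bookkeeping that the paper leaves implicit, but they do not change the route of the argument.
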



\subsection{Throughput Landscape in Summary}
We now present a summary of the throughput landscape for reconfigurable datacenter networks, contextualizing prior research alongside the key results of this paper. Figure~\ref{fig:intro} illustrates the throughput landscape.

\noindent \textbf{Demand-oblivious reconfigurable (Prior work):} The throughput of demand-oblivious networks is tightly bounded by $\frac{1}{2}\cdot (1-\Delta_r)$~\cite{10.1145/3579312,10.1145/3491050,10.1145/3519935.3520020}, where $\Delta_r$ is the fraction of time spent in reconfigurations (typically $\Delta_r>0.9$~\cite{10.1145/3098822.3098838,10.1145/3387514.3406221}). The worst-case throughput of $\frac{1}{2}\cdot (1-\Delta_r)$ is achieved under permutation demand matrices~\cite{10.1145/3579312}. These networks correspond to oblivious and \emph{fixed-duration} reconfigurations in Figure~\ref{fig:intro}.

\noindent \textbf{Demand-aware static (This work):} The throughput of demand-aware static networks is lower bounded by $\frac{2}{3}$ (Theorem~\ref{th:lb-static}) and upper bounded by $\frac{4}{5}$ (Theorem~\ref{th:ub-static}) for uniform-residual demand matrices. The upper bound of $\frac{4}{5}$ is achieved under a demand matrix that specifies alternating $0.5$ and $1.5$ (normalized by capacity) between source-destination pairs in the network.

\noindent \textbf{Demand-aware periodic (This work):} Based on Corollary~\ref{cor:static-periodic}, all our results on demand-aware static networks, transfer to demand-aware periodic networks as well \ie the throughput is lower bounded by $\frac{2}{3}$ and upper bounded by $\frac{4}{5}$ for uniform-residual demand matrices. These networks correspond to demand-aware and \emph{fixed-duration} reconfigurations in Figure~\ref{fig:intro}.

\noindent \textbf{Demand-aware reconfigurable (This work):} Since the throughput of demand-aware periodic networks is lower bounded by $\frac{2}{3}$, the throughput of demand-aware reconfigurable networks as a whole is also lower bounded by $\frac{2}{3}$. The throughput upper bound is $1$ (Theorem~\ref{th:throughput-ideal}). These networks correspond to demand-aware and \emph{variable-duration} reconfigurations in Figure~\ref{fig:intro}.

\noindent \textbf{Separation of demand-aware \& demand-oblivious:} In this work, our results demonstrate a distinct separation in throughput between demand-aware and demand-oblivious networks under uniform-residual demand matrices. Specifically, demand-aware networks can achieve a throughput that is at least $16\%$ higher than that of demand-oblivious networks in the worst-case scenario.

\begin{figure*}
    \center
        \includegraphics[trim={0.65cm 1.35cm 0 1.3cm},clip,width=0.9\linewidth]{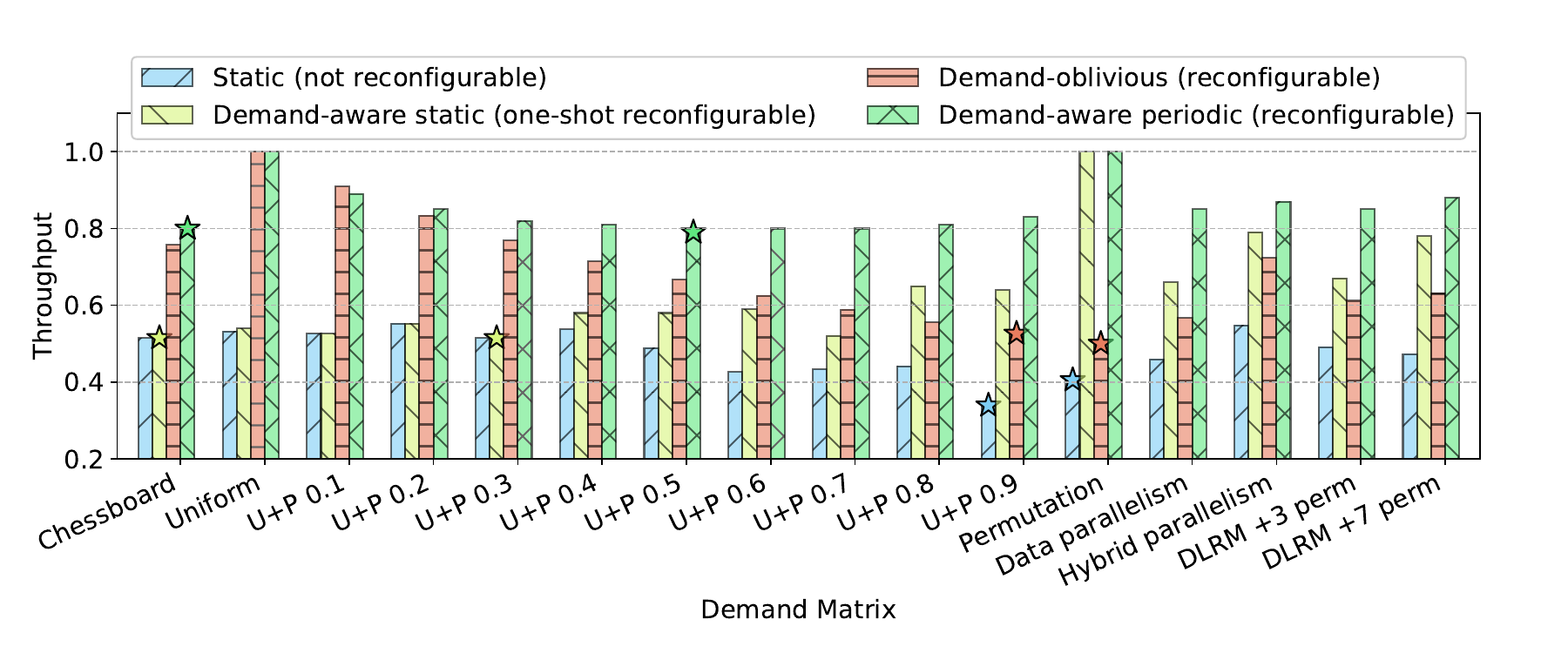}
    \caption{The throughput of demand-aware periodic networks is strictly superior to demand-oblivious and static networks across all demand matrices. Demand-aware static performs poorly under synthetic demand matrices due to low degree, but it outperforms demand-oblivious for DNN training demand matrices (last four on the right). Worst-cases for each network are indicated by $\star$.}
    \label{fig:degree-4}
    \vspace{-2mm}
\end{figure*}

\section{Empirical Evidence}\label{sec:eval}

In this section, we validate the throughput bounds established in \S\ref{sec:throughput}. Additionally, our empirical evaluation aims at answering the following questions.

\medskip
\noindent \textbf{(Q1)} \textit{In contrast to worst-case throughput, how do demand-oblivious and demand-aware networks fare against each other in terms of throughput for a given demand matrix?}

We find that demand-aware periodic networks 
out-perform alternative network designs (by up to $30\%$) across all the demand matrices considered in our evaluation, confirming their superior throughput bounds.

\medskip
\noindent \textbf{(Q2)} \textit{How does degree (incoming \& outgoing links) of the physical topology affect the throughput of demand-aware networks?}

Our evaluation shows an interesting relation between throughput and degree for demand-aware static networks. In particular, throughput increases with degree, reaching the throughput of demand-aware periodic networks for large degree. Our evaluation confirms our results in \S\ref{sec:demand-aware-periodic}, showing that the throughput of demand-aware periodic networks is largely independent of degree. 

\subsection{Methodology}
Our evaluation is based on a linear programming approach, using Gurobi~\cite{gurobi}.

\myitem{Network:} We consider a network consisting of $n=16$ ToRs. We initially set the number of incoming and outgoing links (degree) to $u=4$ and later vary between $4$ to $16$. We assume that each link is of capacity $25$Gbps. This network corresponds to the testbed used in prior work~\cite{285119}.

\myitem{Demand matrices:} We evaluate each network using the four demand matrices corresponding to DNN training~\cite{285119} \ie data parallelism, hybrid parallelism, DLRM $+3$, $+7$ permuted. Additionally, we consider the following synthetic demand matrices: \first Chessboard, the demand matrix used in Theorem~\ref{th:ub-static}, \second Uniform, the best-case demand matrix for demand-oblivious, \third Permutation, the worst-case demand matrix for demand-oblivious \fourth a combination of uniform and permutation \ie $\alpha$ times permutation $P$ and $(1-\alpha)$ times uniform $U$, denoted by $U+P$ $\alpha$. We consider $\alpha$ values between $0.1$ to $0.9$.
In total, we consider $16$ demand matrices.

\myitem{Comparisons:} We compare four networks \first \textbf{\textcolor{myblue}{static}} (no reconfigurations \eg expander), \second \textbf{\textcolor{myyellow}{demand-aware static}} (one-shot reconfiguration), \third \textbf{\textcolor{mydarkred}{demand-oblivious}} (\eg RotorNet~\cite{10.1145/3098822.3098838}), and \fourth \textbf{\textcolor{mygreen}{demand-aware periodic}} (see \S\ref{sec:demand-aware-periodic}).

\myitem{Computing throughput:} We use a combination of linear programming and heuristics to compute throughput for each type of network. Our linear program is based on standard multi-commodity flow formulation, with the objective to maximize throughput (Appendix~\ref{sec:linear-program}). As a result, routing and congestion control are optimal for each type of network, and the obtained throughput values correspond to the \textit{ideally} achievable throughput. We construct a static network using random regular graphs. We use complete graph for demand-oblivious networks due to their throughput equivalence~\cite{10.1145/3579312}. Computing optimal topologies with throughput maximization objective turns out to be impractical even for a $16$ ToR network. In fact, prior work resorted to maximum link utilization as an objective~\cite{10.1145/3544216.3544265}.
Leveraging our integer-residual decomposition technique outlined in our proofs in \S\ref{sec:throughput}, we adopt an iterative approach in steps of $0.01$ (resulting in an error margin of $0.01$) to find the maximum throughput for demand-aware networks (Appendix~\ref{sec:linear-program}).

\vspace{-3mm}
\subsection{Results}
\myitem{Demand-aware periodic outperforms in throughput: }
As evidenced by our worst-case bounds in \S\ref{sec:demand-aware-periodic}, our results in Figure~\ref{fig:degree-4} show that the demand-aware periodic network outperforms for every demand matrix. The lowest throughput across all demand matrices is $0.8$, achieved under chessboard demand matrix. Interestingly, chessboard matrix in Corollary~\ref{cor:throughput-periodic-ub} gives an \emph{upper bound} of $0.8$, hinting that the lower bound of $\frac{2}{3}$ in Corollary~\ref{cor:throughput-periodic-lb} can potentially be improved in the future. Starting from uniform demand matrix, as $\alpha$ increases, the throughput drops close to $0.8$ between $\alpha=0.5$ and $\alpha=0.7$ but improves beyond $\alpha=0.7$ and reaches throughput $1$ for permutation demand matrix. Interestingly, permutation demand matrix is the worst-case for demand-oblivious networks, achieving a throughput of only $0.5$. Even for the DNN training workloads, demand-aware periodic networks achieve a high throughput. In particular, demand-aware networks improve throughput by $49.9\%$ for data parallelism, by $20.1\%$ for hybrid parallelism, by $38.9\%$ for DLRM $+3$ permuted, and by $39.2\%$ for DLRM $+7$ permuted, compared to demand-oblivious networks.

\noindent \textbf{Worst-case throughput of demand-aware periodic is independent of degree:} Both lower and upper bound for the throughput of demand-aware periodic established in \S\ref{sec:demand-aware-periodic} are independent of degree. This can be confirmed based on our results in Figure~\ref{fig:worst-case}, showing that the worst-case throughput across all our matrices remains similar for degrees between $4$ to $16$. We find that the worst-case throughput is achieved under either chessboard or $U+P$ $0.5$ demand matrices. Interestingly, Figure~\ref{fig:worst-case} shows that demand-aware periodic networks improve the throughput by $30\%$ (absolute) consistently, compared to demand-oblivious networks. Further, the worst-case throughput being consistently close to $0.8$ in our matrices (although limited), gives further hope for closing the gap between our lower bound $\frac{2}{3}$ and upper bound $\frac{4}{5}$.

\begin{figure}[t]
        \centering
        \includegraphics[trim={0 0.5cm 0 1.5cm},clip,width=0.9\linewidth]{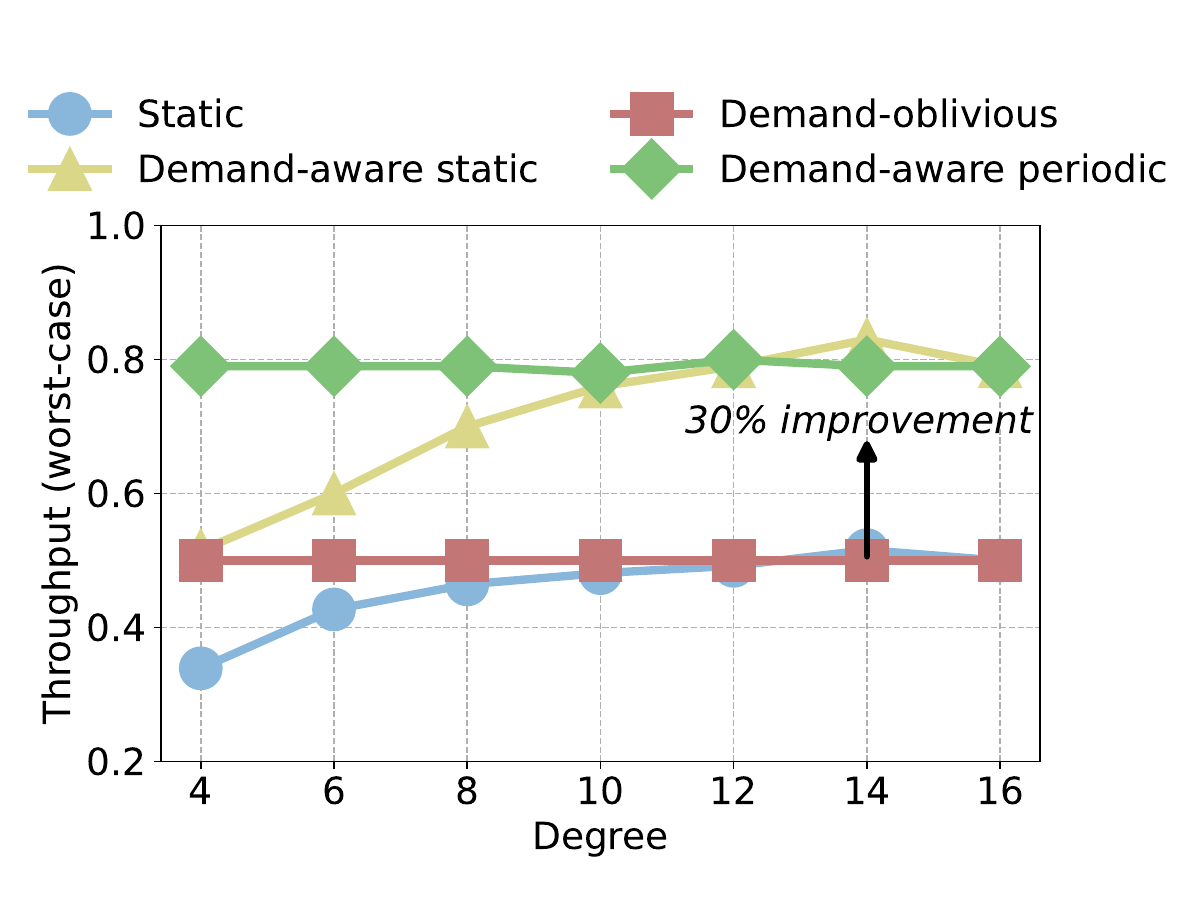}
        \vspace{-3mm}
        \caption{The worst-case throughput of demand-aware periodic is independent of degree and $30\%$ greater than that of demand-oblivious. The throughput of demand-aware static is dependent on degree but close to demand-oblivious even at low degree.}
        \vspace{-4mm}
        \label{fig:worst-case}
\end{figure}

\myitem{Worst-case throughput of demand-aware static depends on degree:} Our analysis in \S\ref{sec:demand-aware-static} establishes the throughput bounds for demand-aware static in the special case of degree $n$. Unsurprisingly, our empirical results in Figure~\ref{fig:worst-case} show that the throughput of demand-aware static is dependent on the degree. The worst-case throughput at low degree ($=4$) is similar to that of demand-oblivious networks. Interestingly, at degree $=4$ (Figure~\ref{fig:degree-4}), the worst-case throughput is achieved under chessboard demand matrix but demand-oblivious networks perform much better for chessboard matrix. Yet, the worst-case throughput for demand-oblivious networks is achieved under permutation demand matrix but demand-aware static networks perform optimally under permutation matrix. As the degree increases, the worst-case throughput converges to that of demand-aware periodic.

\noindent \textbf{Demand-aware static suits ML workloads even with low degree:} Although from Figure~\ref{fig:worst-case}, the worst-case throughput of demand-aware static is low for degree $=4$, the throughput for specific demand matrices is much higher. From Figure~\ref{fig:degree-4}, demand-aware static of degree $4$ improves throughput by $16.4\%$ for data parallelism, by $9.1\%$ for hybrid parallelism, by $9.5\%$ for DLRM $+3$ permuted, and by $23.4\%$ for DLRM $+7$ permuted, compared to demand-oblivious. 

Overall, our empirical results on the throughput of different networks under various demand matrices align with the theoretical bounds presented in \S\ref{sec:throughput}.

\section{Related Work}
Datacenter topologies have been widely studied in the literature both in the context of traditional packet-switched networks~\cite{10.1145/2999572.2999580,180604,10.1145/1402958.1402967,10.1145/1592568.1592576,227667,10.1145/2785956.2787508,7013016,f10,10.1145/1592568.1592577} and emerging reconfigurable optically circuit-switched networks~\cite{10.1145/3098822.3098838,10.1145/3387514.3406221,10.1145/2934872.2934911,10.1145/1851182.1851223,10.1145/3579449,10.1145/2377677.2377761,kandula2009flyways,opera,10.1145/2619239.2626328,201560,10.1145/2619239.2626332,6490069,10.1145/1851182.1851222,7066977,10.1145/2896377.2901479,10.1145/1868447.1868455,10.1145/3491050,278374,10.1145/3351452.3351464}. In the design of topologies, various metrics of interest have been considered. For instance, uniformly high bandwidth availability~\cite{10.1145/1402958.1402967,10.1145/1592568.1592577}, expansion~\cite{10.1145/2999572.2999580,180604}, fault-tolerance~\cite{f10}, and even the life cycle management of a datacenter~\cite{227667}. In the context of reconfigurable networks, typically, the goal has been either to minimize the reconfiguration overhead~\cite{10.1145/3098822.3098838,10.1145/3387514.3406221} or to minimize the bandwidth tax~\cite{10.1145/2934872.2934911,10.1145/1851182.1851223,10.1145/3579449,285119}.

Bisection bandwidth has been extensively used as metric for topologies in the past in order to reason about the capacity and potential bottlenecks of a topology. Recent works argue for a new measure \ie ``throughput'', to understand the maximum load supported by a topology~\cite{7877143,179775,10.1145/3452296.3472913,10.1145/3491050,10.1145/3579312}. In fact, the max-flow that relates to the throughput of a topology, can be $\mathcal{O}(\log(n))$ factor lower than the sparsest cut~\cite{10.1145/331524.331526,10.1109/SFCS.1988.21958,7877143}. Namyar et al.~study the throughput upper bound for static datacenter topologies and show a separation between Clos (i.e., fat-trees) and expander-based networks in terms of throughput~\cite{10.1145/3452296.3472913}. In the context of reconfigurable networks, only recently have the throughput bounds of demand-oblivious networks been established~\cite{10.1145/3579312,10.1145/3491050,10.1145/3519935.3520020}. In this paper, we focus on the throughput landscape of reconfigurable networks as a whole, showing a separation result between demand-aware and demand-oblivious networks.

While throughput of a datacenter topology is interesting from a theory standpoint, a vast majority of the literature focuses on practically achieving the ideal throughput of a topology. For instance, congestion control~\cite{10.1145/1851182.1851192,10.1145/2785956.2787510,10.1145/3341302.3342085,278346,10.1145/3387514.3406591,276958,10.1145/3387514.3405899,10.1145/2018436.2018443}, buffer management~\cite{abm,fab,trafficaware,295539,10229046,295535}, scheduling~\cite{10.1145/2486001.2486031,259355,cassini}, load-balancing~\cite{10.1145/2619239.2626316,10.1145/2890955.2890968,10.1145/3098822.3098839}. In fact, the underlying protocols can turn out to be the key enablers (or limiters) of system performance in the datacenter~\cite{10.1145/3387514.3406591}. Only recently, congestion control tailored for reconfigurable networks has been considered~\cite{10.1145/3603269.3610862,10.1145/3544216.3544254,246336}. We leave it for future work to study the protocols required by reconfigurable networks to reach their ideal throughput.

\section{Future Research Directions}
The primary objective of this paper has been to explore the \emph{throughput} landscape of reconfigurable datacenter networks. We believe this paper opens several interesting avenues for future work, encompassing both systems and theoretical aspects. In this section, we briefly outline some of these prospective research directions \first on the practical realization of the theoretical throughput of such networks and \second on enhancing the throughput bounds established in this paper.

\vspace{-3mm}
\subsection{Systems}
Our analysis in this paper focuses on the throughput that is \emph{ideally} achievable by reconfigurable networks. To achieve this ideal throughput, various protocols need to function together optimally, especially routing and congestion control.

\myitem{Routing:} 
Traditional datacenter networks predominantly utilize equal-cost multipath (ECMP) routing, often at per-flow granularity. However, in the realm of reconfigurable networks, ECMP does not suffice to maximize throughput due to the presence of multiple paths between source-destination pairs that are not necessarily equal in cost \ie single-hop paths (direct-connect circuits) and multi-hop paths (via intermediate nodes). For example, in demand-oblivious reconfigurable networks, a single-hop path is available only once in a given period, necessitating the use of $2$-hop paths as well to improve throughput~\cite{10.1145/3098822.3098838, 10.1145/3387514.3406221, 10.1145/3579312}. Previous studies, especially those focusing on demand-oblivious networks, advocate for Valiant routing~\cite{valiant1982scheme}, a demand-oblivious routing scheme. However, this approach can reduce throughput by a factor of $2$ in the worst-case scenario.
While demand-aware routing (\ie adaptive routing) algorithms have been proposed for periodic reconfigurable networks~\cite{10.1145/3098822.3098838,opera}, they are limited to $2$-hop paths.
We believe two areas of research on routing can significantly enhance the practicality and the throughput benefits of reconfigurable networks. Firstly, generalizing existing demand-aware routing algorithms in reconfigurable networks to incorporate $k$-shortest paths. Secondly, understanding the convergence time of demand-aware routing algorithms in response to changes in the switching schedule.

\myitem{Congestion control:} Reconfigurable networks present with unique set of challenges for congestion control. The available bandwidth can change drastically after a reconfiguration~\cite{246336,10.1145/3544216.3544254}. A positive queue length does not necessarily imply $100\%$ link utilization, in fact, a positive queue length could also imply zero utilization (due to waiting times for next-hop circuit). Interestingly, the throughput of certain reconfigurable networks also depends on the available buffers in the network~\cite{10.1145/3579312}.
While recent works propose congestion control algorithms for reconfigurable networks, they are still limited to either $2$-hop paths~\cite{10.1145/3387514.3406221,10.1145/3098822.3098838} or periodic reconfigurations~\cite{10.1145/3544216.3544254,246336,10.1145/3603269.3610862}. Future research on congestion control algorithms, suitable for a wide spectrum of reconfigurable networks, would not only enhance the practically achievable throughput of these networks, but also facilitates realistic packet-level simulations (e.g., in NS3~\cite{ns3}).



\myitem{Throughput per $\$$ analysis:} This paper does not engage in comparisons with topologies constructed using electrical packet switches (e.g., Clos-based~\cite{10.1145/1402958.1402967}). The common approach to estimate cost, especially in systems evaluations is to scale the cost linearly by the number of ports and cables used in the network~\cite{opera,10.1145/3098822.3098838}. Yet, from a throughput perspective, the comparison of throughput achieved per unit cost would change significantly between almost linear, superlinear, and sublinear cost functions.
A fair comparison between packet-switched and circuit-switched topologies in a formal setting necessitates well-defined cost functions for the switches. Future research efforts aimed at developing cost functions would open up interesting avenues, such as formally studying throughput per cost as a metric to compare different datacenter topologies.

\subsection{Theory}
The theoretical results in this paper provide insights into the landscape of throughput bounds of reconfigurable networks. Our analytical framework features interesting connections to classic problems in the literature, opening opportunities for future research directions to tighten our bounds and to generalize the results.

\myitem{Connections to matrix rounding problem:}
As observed in this paper, demand-aware topology design is intuitively related to the integer-residual matrix decomposition of the demand matrix. We use floor function for matrix decomposition in our proofs. An alternative method is \emph{matrix rounding}, which involves adjusting each entry of the matrix by either applying the floor or ceiling function in such a way that the sums of the rows and columns remain unchanged~\cite{bacharach1966matrix}. Interestingly, there always exists a feasible solution for matrix rounding based on a formlution of maximum interger flow in a network specified by rows and colums of the matrix~\cite{bacharach1966matrix}. In other words, given a saturated demand matrix (doubly stochastic), the solution to matrix rounding gives the integer part of our decomposition without changing the row and column sums. This implies that a rounded matrix is always regular \ie adding demand-aware links based on the rounded matrix results in a regular topology --- a desirable property for common graph theoretic techniques, especially for throughput.
We believe that drawing more insights from matrix rounding problem in the future can potentially tighten our bounds and generalize our results to the set of all demand matrices within the hose model.

\myitem{Understanding the latency of demand-aware networks:} Recent works formally show inherent tradeoffs in demand-oblivious reconfigurable networks~\cite{10.1145/3519935.3520020,10.1145/3579312}. Our focus in this paper is on the throughput landscape of reconfigurable networks. It is an interesting future research direction to formally study the landscape under a joint-objective between throughput, latency and buffer requirements. 

\section{Conclusion}
We presented the throughput landscape of reconfigurable networks, formally establishing a clear distinction between demand-oblivious and demand-aware reconfigurable networks. We presented both upper and lower bounds for the throughput of demand-aware networks.
Our analytical framework allowed us to unveil innovative reconfigurable network designs that combine the simplicity of circuit switching characteristic of demand-oblivious networks with the throughput advantages inherent to demand-aware networks.
In the future, we plan to formally study the two-dimensional landscape encompassing both  throughput and latency in reconfigurable networks.
\label{bodyLastPage}

\balance
\clearpage

{ \balance
\bibliographystyle{unsrt}
\bibliography{references}
}

\medskip

\noindent \textbf{{\large Acknowledgments}}

\medskip
\noindent
This work is part of a project that has received funding from the European Research Council (ERC) under the European Union’s Horizon 2020 research and innovation programme, consolidator project Self-Adjusting Networks (AdjustNet), grant agreement No. 864228, Horizon 2020, 2020-2025.
\begin{figure}[!h]
    \centering
    \includegraphics[width=0.6\linewidth]{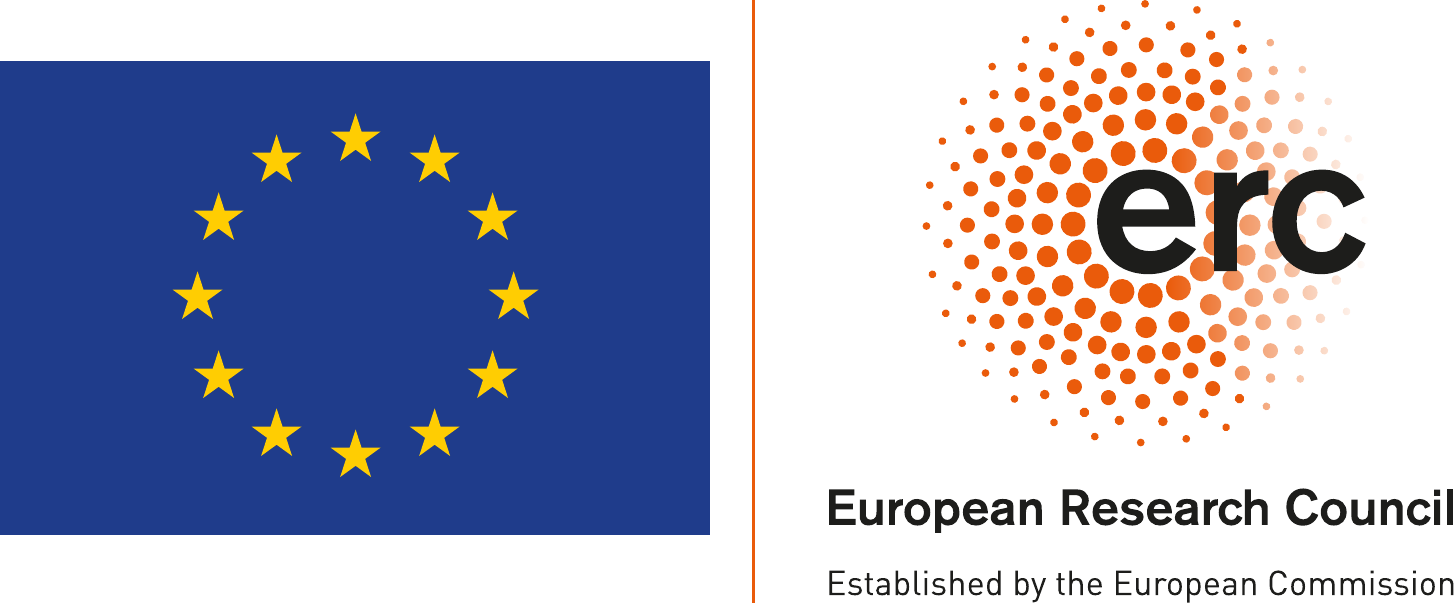}
    \label{fig:my_label}
\end{figure}
\newpage
\appendix

\section{Linear Program Formulation}\label{sec:linear-program}
Throughput maximization is a variant of multi-commodity maximum flow problem. For completeness, in the following, we present the linear program (LP) used in \S\ref{sec:eval}.  Given a network of $n$ nodes, the LP takes demand matrix $\mathcal{M}$, and the number of links $\hat{c}^{i,j}$ between each node pair $(i,j)$ as input. Setting link capacities to $1$, we interpret $\hat{c}^{i,j}$ as the capacity between $i,j$. We use $f_{i,j}^{s,d}$ to refer to the flow on edge $(i,j)$ corresponding to $(s,d)$ demand. Our objective is to maximize throughput $\theta$ such that the scaled demand matrix $\theta\cdot \mathcal{M}$ satisfies source-destination demands, flow conservation and capacity constraints. 

\medskip
\noindent\textbf{Input:} Demand matrix $\mathcal{M} = \{m_{s,d} \mid s\in V,\ d\in V\}$
\\
\phantom{}\hspace{30pt}Number of links $\hat{c}^{i,j}$ between $(i,j)$. \\

\noindent \textbf{Objective Function:}
\begin{align*}
\text{Maximize} \quad & \theta
\end{align*}

\noindent \textbf{Subject to the constraints:}
\begin{align*}
\text{Source demand:} & \sum_{i\in V\backslash \{s\}} f^{s,d}_{s,i} \ge \theta \cdot m_{s,d} \\
& \quad\quad\quad\quad\quad\quad\quad \forall s\in V,\ \forall d \in V \\
\text{Destination demand:} & \sum_{i\in V\backslash \{d\}} f^{s,d}_{i,d} \ge \theta \cdot m_{s,d}\\ 
& \quad\quad\quad\quad\quad\quad\quad \forall s\in V,\ \forall d \in V  \\
\text{Flow conservation:} & \sum_{i\in V\backslash\{j\}} f^{s,d}_{i,j} - \sum_{k\in V\backslash\{j\}} f^{s,d}_{j,k} = 0  \\
& \quad\quad\quad\quad\quad\quad\quad\quad\forall j \in V\backslash\{s,d\}\\
& \quad\quad\quad\quad\quad\quad\quad\forall s\in V,\ \forall d \in V\\
\text{Capacity constraints:} & \sum_{s\in V} \sum_{d\in V} f^{s,d}_{i,j} \le \hat{c}^{i,j} \\
& \quad\quad\quad\quad\quad\quad\quad\forall i\in V,\ \forall j \in V
\end{align*}

\noindent \textbf{Variables:}
\begin{align*}
\text{Flow: } & f_{i,j}^{s,d} \ge 0 \ , \ f_{i,j}^{s,d} \in \mathbb{R} \\
& \quad\quad\quad\quad\quad\forall i\in V, j\in V, s\in V,\ \forall d \in V\\
\text{Throughput: } & \theta\ge 0 \ , \ \theta \in \mathbb{R} \\
\end{align*}

\medskip
The above linear program requires the topology as an input. We construct the static network using random regular graphs. We use the complete graph for demand-oblivious networks due to their throughput equivalence~\cite{10.1145/3579312}.

For demand-aware static networks, we first scale the given matrix by a value \texttt{iter}, initially set to $1$. We then decompose the scaled matrix into floor and residual matrices. We interpret the floor matrix as adjacency matrix and add corresponding links between source-destination pairs. After this step, let $d$ be the minimum number of remaining outgoing as well as incoming links available at every node. We construct a $d$-regular random graph using the remaining links to satisfy the residual matrix. We then run the above linear program to maximize throughput $\theta$. We run the above approach for different values of \texttt{iter} from $1$ to $0$ in steps of $0.01$, until the linear program objective reaches $1$. Let $\hat{\theta}$ be the value of \texttt{iter} when the linear program objective reaches $1$. The maximum throughput for the given demand matrix is then $\hat{\theta}$ since we scale the demand matrix by $\hat{\theta}$ and the scaled matrix is feasible in the network (objective $=1$).

For demand-aware periodic networks, we use Corollary~\ref{cor:static-periodic} to  construct the equivalent demand-aware static network and then compute throughput using the heuristic described above.

We note that the iterative approach used in our heuristic has an error margin of $0.01$ since we only iterate in steps of $0.01$. Given the significant differences in the throughput of different types of networks, we consider our error margin to be negligible. Further, our integer-residual decomposition technique may not always result in the optimal topology. Alternatively, computing the optimal topology can be incorporated in the throughput maximization linear program by using $\hat{c}^{i,j}$ as integer variables, subject to degree constraints. However, this approach turns out to be impractical even for small network sizes. In fact, prior work resorted to maximum link utilization as an objective~\cite{10.1145/3544216.3544265}. Developing performant approximation algorithms for the throughput maximization problem in demand-aware networks is still an open area of research.

\section{Deferred Proofs}\label{sec:deferred-proofs}

\begin{proof}[Proof of Corollary~\ref{cor:throughput-periodic-lb}]
Since the throughput of a demand-aware static network of degree $n$ with link capacity $\frac{c\cdot u}{n}$ is lower bounded by $\frac{2}{3}$ for uniform-residual demand matrices, proof follows from Corollary~\ref{cor:static-periodic} for the throughput of a demand-aware periodic network with degree $u$ and link capacity $c$.
\end{proof}

\begin{proof}[Proof of Corollary~\ref{cor:throughput-periodic-ub}]
Since the throughput of a demand-aware static network of degree $n$ with link capacity $\frac{c\cdot u}{n}$ is at most $\frac{4}{5}$ from Theorem~\ref{th:ub-static}, the throughput of a demand-aware periodic network cannot be greater than $\frac{4}{5}$ due to Corollary~\ref{cor:static-periodic}.
\end{proof}

\newpage

\onecolumn

\section{Additional Results}

\begin{figure}[!h]
    \center
    \begin{subfigure}{1\linewidth}
        \centering
        \includegraphics[trim={0.7cm 0.5cm 0 1.2cm},clip,width=0.9\linewidth]{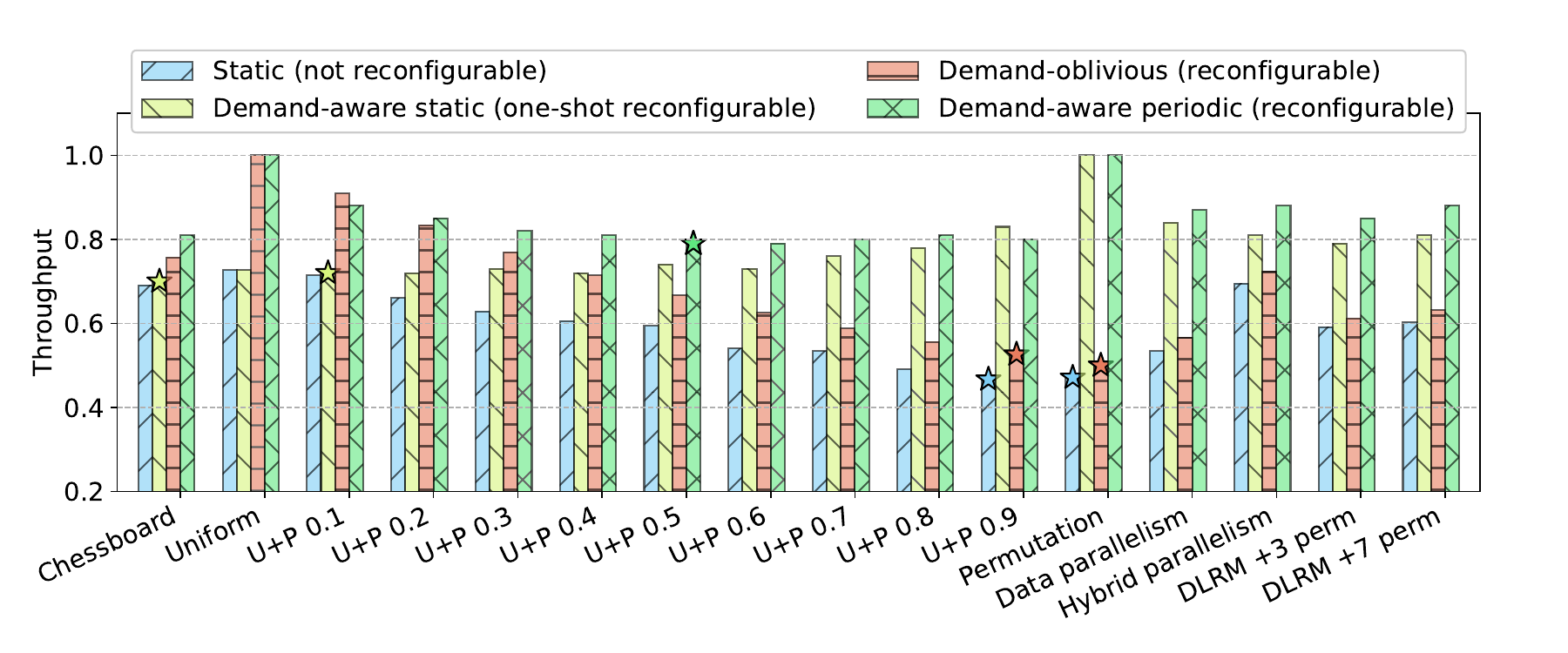}
        \subcaption{Degree$=8$}
    \end{subfigure}
    \begin{subfigure}{1\linewidth}
        \centering
        \includegraphics[trim={0.7cm 0.5cm 0 1.2cm},clip,width=0.9\linewidth]{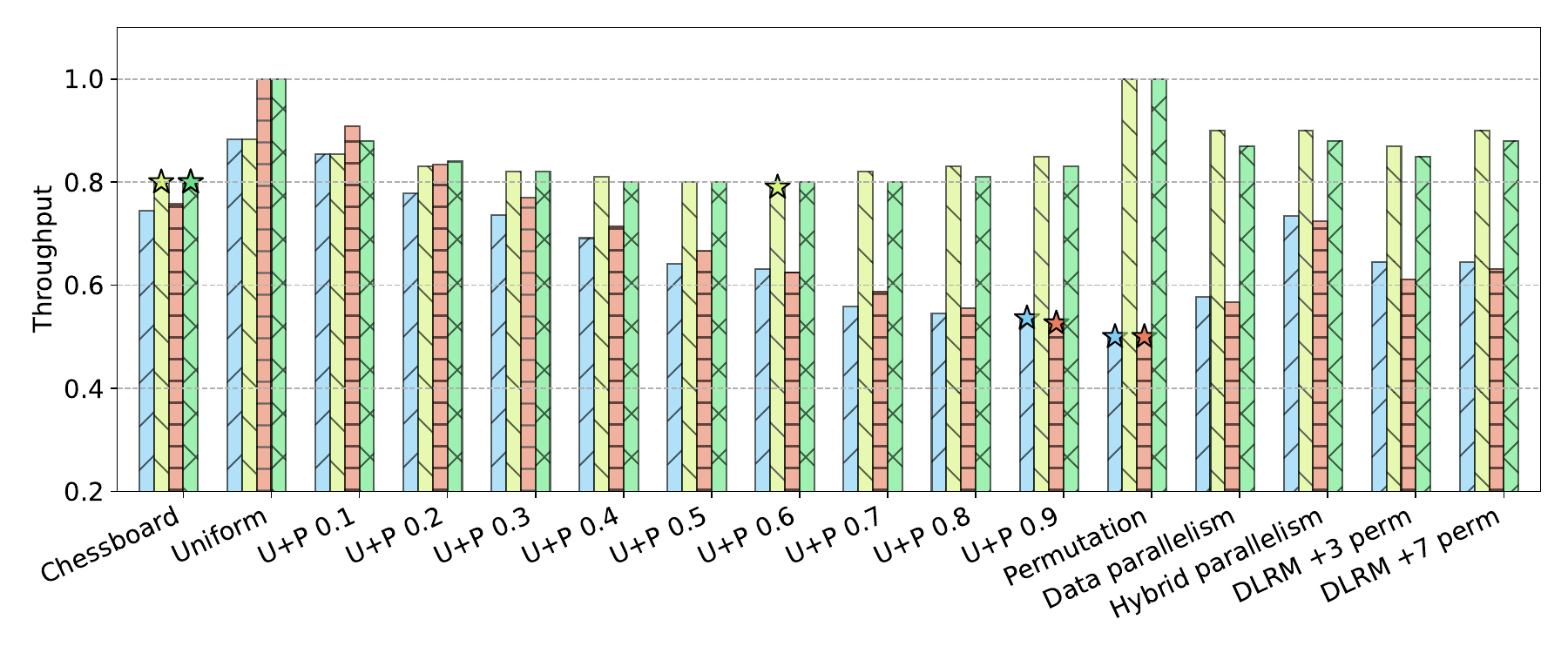}
        \subcaption{Degree$=12$}
    \end{subfigure}
    \begin{subfigure}{1\linewidth}
        \centering
        \includegraphics[trim={0.7cm 1.35cm 0 1.2cm},clip,width=0.9\linewidth]{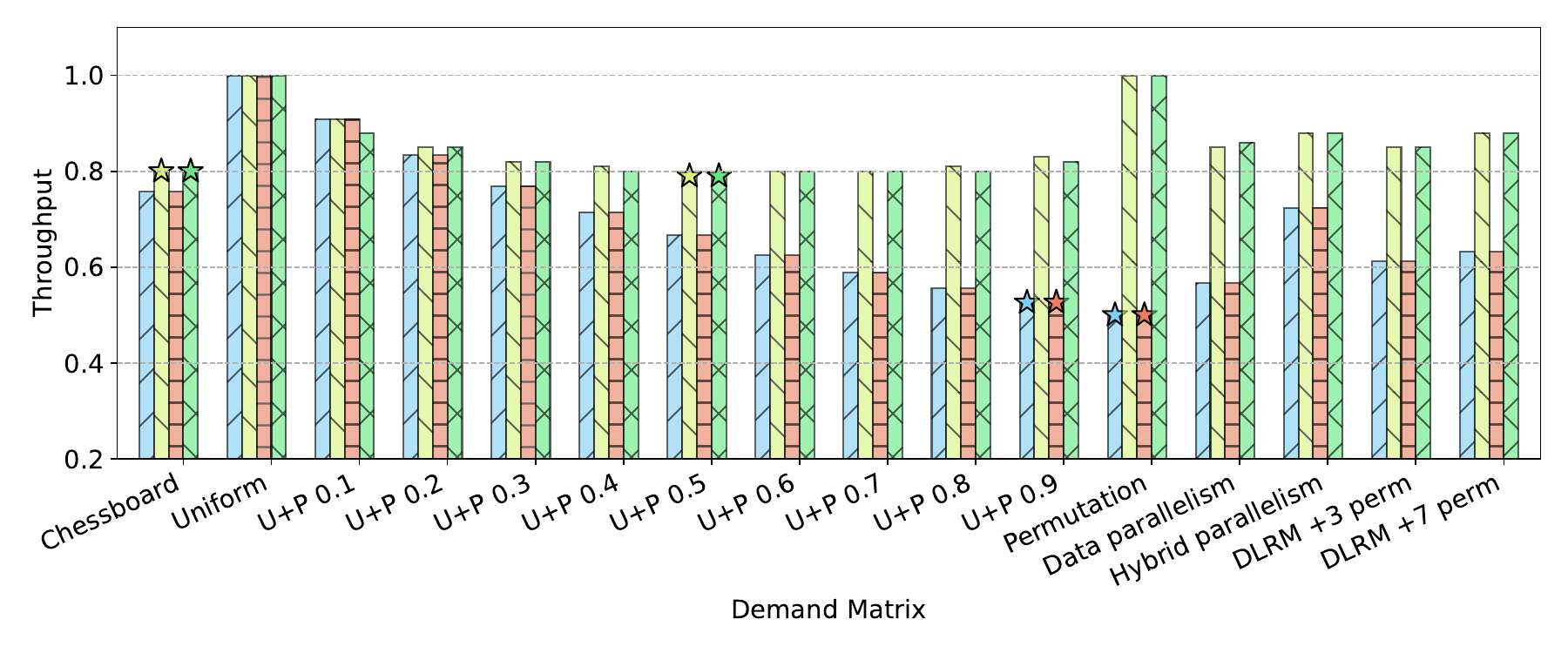}
        \subcaption{Degree$=16$}
    \end{subfigure}
    \caption{Throughput of reconfigurable datacenter networks under various demand matrices for different degrees \ie the number of incoming and outgoing links of the physical topology.}
    \label{fig:degrees-throughput}
\end{figure}

\label{LastPage}
\end{document}